\newcommand{\EasyCrypt}{\textsf{EasyCrypt}\xspace}
\newcommand{\Saprhl}{\textsf{apRHL}\xspace}
\newcommand{\Saprhlp}{\textsf{apRHL}$^+$\xspace}
\newcommandx{\note}[2][1=]{\todo[inline,linecolor=red,backgroundcolor=red!25,bordercolor=red,#1]{#2}}
\newtheorem{thm}{Theorem}
\newtheorem{definition}{Definition}
\newtheorem{prop}{Proposition}
\newcommand{\rname}[1]{[{\sc #1}]}
\newcommand{\Dist}{\ensuremath{\mathbf{Distr}}}
\newcommand{\SDist}{\ensuremath{\mathbf{SDistr}}}
\newcommand{\supp}{\mathsf{supp}}
\newcommand{\subst}[2]{\left\{#2/#1\right\}}
\newcommand{\Var}{\mathcal{X}}
\newcommand\q{[\![}
\newcommand\p{]\!]}
\newcommand{\aprhl}[5]{\vdash #1 \sim_{#5} #2 : #3 \Longrightarrow #4}
\newcommand{\AEquiv}[6]{\vdash {#2} \sim_{\!\left\langle#5,#6\right\rangle} {#3} : {#1} \Longrightarrow {#4}}
\newcommand{\sidel}{\langle 1\rangle}
\newcommand{\sider}{\langle 2\rangle}
\newcommand\denot[1]{\q #1 \p}
\newcommand{\dsem}[2]{\denot{#2}_{#1}}
\newcommand{\Skip}{\mathsf{skip}}
\newcommand{\Seq}[2]{{#1};\,{#2}}
\newcommand{\Ass}[2]{#1 \leftarrow #2}
\newcommand{\Rand}[2]{#1 \stackrel{\raisebox{-.25ex}[.25ex]%
 {\tiny $\mathdollar$}}{\raisebox{-.2ex}[.2ex]{$\leftarrow$}} #2}
\newcommand{\Cond}[3]{\mathsf{if}\ #1\ \mathsf{then}\ #2\ \mathsf{else}\ #3}
\newcommand{\WWhile}[2]{\mathsf{while}\ #1\ \mathsf{do}\ #2}
\newcommand{\Lap}{\mathcal{L}}
\newcommand{\OneLap}{\mathcal{L}^{\mathsf{os}}}
\newcommand{\ExpMech}{\mathsf{ExpM}}
\newcommand{\Expr}{\mathcal{E}}
\newcommand{\Cmd}{\mathcal{C}}
\newcommand{\qscore}{\mathsf{qscore}}
\newcommand{\lift}[1]{#1^\sharp}
\newcommand{\alift}[2]{#1^{\sharp #2}}
\newcommand\Small{\fontsize{8.2pt}{8.4pt}\selectfont}
\newcommand*\LSTfont{\Small\ttfamily\SetTracking{encoding=*}{-60}\lsstyle}
\def\lstrnd{\stackrel{\raisebox{-.15ex}{\ensuremath{\scriptscriptstyle\$}}}{\raisebox{-.2ex}{\ensuremath{\leftarrow}}}}
\definecolor{DarkGreen}{rgb}{0.1,0.5,0.1}
\definecolor{DarkRed}{rgb}{0.5,0.1,0.1}
\definecolor{DarkBlue}{rgb}{0.1,0.1,0.5}
\crefname{section}{\S}{\S}
\Crefname{section}{\S}{\S}
\crefname{prop}{proposition}{propositions}
\Crefname{prop}{Proposition}{Propositions}
\crefname{lem}{lemma}{lemmas}
\Crefname{lem}{Lemma}{Lemmas}
\crefname{thm}{theorem}{theorems}
\Crefname{thm}{Theorem}{Theorems}
\crefname{definition}{definition}{definitions}
\Crefname{definition}{Definition}{Definitions}
\newif\iftpdp\tpdpfalse
\begin{document}

\conferenceinfo{LICS '16}{July 5--8, 2016, New York, New York, USA}
\copyrightyear{2016}
\copyrightdata{978-1-4503-4391-6/16/07}
\copyrightdoi{2933575.2934554}
\publicationrights{licensed}     

\title{Proving Differential Privacy via Probabilistic Couplings}
\authorinfo{
  Gilles Barthe$^{\star}$ \and
  Marco Gaboardi$^{\ddagger}$ \and
  Benjamin Gr\'egoire$^{\$}$ \and
  Justin Hsu$^{\#}$ \and
  Pierre-Yves Strub$^{\star}$
}{
$^{\star}$ IMDEA Software \and
$^{\ddagger}$ University at Buffalo, SUNY \and
$^{\$}$ Inria \and
$^{\#}$ University of Pennsylvania
}{}
\maketitle

\begin{abstract}
  Over the last decade, \emph{differential privacy} has achieved
  widespread adoption within the privacy community. Moreover, it has
  attracted significant attention from the verification community,
  resulting in several successful tools for formally proving
  differential privacy.  Although their technical approaches vary
  greatly, all existing tools rely on reasoning principles derived
  from the \emph{composition theorem} of differential privacy. While
  this suffices to verify most common private algorithms, there are
  several important algorithms whose privacy analysis does not rely
  solely on the composition theorem.  Their proofs are significantly
  more complex, and are currently beyond the reach of verification
  tools.

  In this paper, we develop compositional methods for formally verifying
  differential privacy for algorithms whose analysis goes
  beyond the composition theorem.  Our methods are based on
  deep connections between  differential privacy and
  \emph{probabilistic couplings}, an established
  mathematical tool for reasoning about stochastic processes. Even
  when the composition theorem is not helpful, we can often prove
  privacy by a coupling argument.

  We demonstrate our methods on two algorithms: the \emph{Exponential
    mechanism} and the \emph{Above Threshold} algorithm, the critical
  component of the famous \emph{Sparse Vector} algorithm. We verify
  these examples in a relational program logic \Saprhlp, which can
  construct approximate couplings. This logic extends the existing
  \Saprhl logic with more general rules for the Laplace mechanism and
  the one-sided Laplace mechanism, and new structural rules enabling
  pointwise reasoning about privacy; all the rules are inspired by the
  connection with coupling. While our paper is presented from a formal
  verification perspective, we believe that its main insight is of
  independent interest for the differential privacy community.
\end{abstract}

\category{F.3.1}{Specifying and Verifying and Reasoning about Programs}{}
\terms{Differential privacy, probabilistic couplings}

\section{Introduction}
\emph{Differential privacy} is a rigorous definition of statistical
privacy proposed by Dwork, McSherry, Nissim and Smith~\citep{DMNS06},
and considered to be the gold standard for privacy-preserving
computations.  Most differentially private computations are built from
two fundamental tools: private primitives and composition theorems
\iftpdp\else(see \Cref{sec:diffpriv})\fi{}. However, there are several important
examples whose privacy proofs go beyond these tools, for instance:
\begin{itemize}
\item The \emph{Above Threshold} algorithm, which takes a list of
  numerical queries as input and outputs the first query
  whose answer is above a certain threshold. Above Threshold is the
  critical component of the Sparse Vector technique. (See, e.g.,
  \citet{DR14}.)

\item The \emph{Report-noisy-max} algorithm, which takes a list of
  numerical queries as input and privately selects the query with the
  highest answer.  (See, e.g., \citet{DR14}.)

\item The \emph{Exponential mechanism}~\citep{MT07}, which privately
  returns the element of a (possibly non-numeric) range with the
  highest score; this algorithm can be implemented as a variant of the
  Report-noisy-max algorithm with a different noise distribution.
\end{itemize}
Unfortunately, existing pen-and-paper proofs of these algorithms use
ad hoc manipulations of probabilities, and as a consequence are
difficult to understand and error-prone.

This raises a natural question: can we develop \emph{compositional
  proof methods} for verifying differential privacy of these
algorithms, even though their proofs appear non-compositional?
Surprisingly, the answer is yes. Our method builds on two key
insights.
\begin{enumerate}
\item A connection between probabilistic liftings and probabilistic
  couplings~\citep{BartheEGHSS15}.
\item A connection between differential privacy and \emph{approximate
    liftings}~\citep{BartheKOZ13,BartheO13}, a generalization of probabilistic
  liftings used in probabilistic process algebra~\citep{Jonsson:2001}.
\end{enumerate}

\subsection*{Probabilistic liftings and couplings}
\emph{Relation lifting} is a well-studied construction in mathematics and
computer science. Abstractly, relation lifting transforms
relations $R \subseteq A \times B$ into relations $\lift{R} \subseteq
TA \times TB$, where $T$ is a functor over sets~\citep{Barr70}. Relation
lifting satisfies a type of composition, so it is a natural foundation for
compositional proof methods.

Relation lifting has historically been an important tool for analyzing
of probabilistic systems. For example, \emph{probabilistic lifting}
specializes the notion of relation lifting for the probability
monad, and appears in standard definitions of probabilistic
bisimulation. Over the last 25 years, researchers have developed a
wide variety of tools for reasoning about probabilistic liftings,
explored applications in numerous areas including security and
biology, and uncovered deep connections with the Kantorovich metric and
the theory of optimal transport (for a survey, see \citet{DengD11}).

While research has traditionally considers probabilistic liftings for partial
equivalence relations, recent works investigate liftings for more general
relations. Applications include formalizing reduction-based cryptographic
proofs~\citep{BartheGZ09} and modeling stochastic dominance and convergence of
probabilistic processes~\citep{BartheEGHSS15}. Seeking to explain the power of
liftings, \citet{BartheEGHSS15} establish a tight connection between
probabilistic liftings and \emph{probabilistic couplings}, a basic tool in
probability theory~\citep{Lindvall02,Thorisson00}. Roughly, a probabilistic
coupling places two distributions in the same probabilistic space by exhibiting
a suitable \emph{witness distribution} over pairs.  Not only does this
observation open new uses for probabilistic liftings, it offers an opportunity
to revisit existing applications from a fresh perspective.

\subsection*{Differential privacy via approximate probabilistic liftings}
Relational program logics~\citep{BartheKOZ13,BartheO13} and relational
refinement type systems~\citep{BGGHRS15} are currently the most flexible techniques
for reasoning formally about differentially private computations. Their
expressive power stems from \emph{approximate probabilistic liftings}, a
generalization of probabilistic liftings involving a metric on
distributions. In particular, differential privacy is a consequence of a
particular form of approximate lifting.

These approaches have successfully verified differential privacy for
many algorithms. However, they are unsuccessful when privacy does not
follow from standard tools and composition properties. In fact, the
present authors had long believed that the verification of such
examples was beyond the capabilities of lifting-based methods.

\subsection*{Contributions}
In this paper, we propose the first formal analysis of differentially
private algorithms whose proof does not exclusively rely on the
basic tools of differential privacy. We make three broad
contributions.

\paragraph*{New proof principles for approximate liftings}  
We take inspiration from the connection between liftings and coupling
to develop new proof principles for approximate liftings.

First, we introduce a principle for decomposing proofs of differential
privacy \emph{pointwise}, supporting a common pattern of proving
privacy separately for each possible output value. This principle is
used in pen-and-paper proofs, but is new to formal approaches.

Second, we provide new proof principles for the Laplace
mechanism. Informally speaking, existing proof principles capture the
intuition that different inputs can be made to look equal by the
Laplace mechanism in exchange for paying some privacy cost.  Our
first new proof principle for the Laplace mechanism is dual, and
captures the idea that equal inputs can be made to look arbitrarily
\emph{different} by the Laplace mechanism, provided that one pays
sufficient privacy. Our second new proof principle for the Laplace
mechanism states that if we add the same noise in two runs of the
Laplace mechanism, the distance between the two values is preserved
and there is no privacy cost.  As far as we know, these proof
principles are new to the differential privacy literature. They are
the key ingredients to proving examples such as Sparse Vector using
compositional proof methods.

We also propose approximate probabilistic liftings for the one-sided Laplace
mechanism, which can be used to implement the Exponential mechanism. The
one-sided Laplace mechanism nicely illustrates the benefits of our approach:
although it is not differentially private, its properties can be formally captured
by approximate probabilistic liftings. These properties can be combined
to show privacy for a larger program.

\paragraph*{An extended probabilistic relational program logic}
To demonstrate our techniques, we work with the relational program logic \\
\Saprhl~\citep{BartheKOZ13}. Conceived as a probabilistic
variant of Benton's relational Hoare logic~\citep{Benton04}, \Saprhl has been
used to verify differential privacy for examples using the standard composition
theorems. Most importantly, the semantics of \Saprhl uses approximate
liftings.  We introduce new proof rules representing our new proof principles,
and call the resulting logic \Saprhlp.

\paragraph*{New privacy proofs}
While the extensions amount to just a handful of rules, they significantly
increase the power of \Saprhl: We provide the first formal verification of two
algorithms whose privacy proofs use tools beyond the composition theorems.
\begin{itemize}
  \item The \emph{Exponential mechanism}. The standard private
    algorithm when the output is non-numeric, this construction is
    typically taken as a primitive in systems verifying privacy.
    In contrast, we prove its privacy within our logic.

  \item The \emph{Sparse Vector} algorithm. Perhaps the most famous
    example not covered by existing techniques, the proof of this
    mechanism is quite involved; some of its variants are not
    provably private. We also prove the privacy of its core subroutine in our
    logic.
\end{itemize}
The proofs are based on coupling ideas, which avoid reasoning about
probabilities explicitly.  As a consequence, proofs are clean,
concise, and, we believe, appealing to researchers from both the
differential privacy and the formal verification communities.

We have formalized the proofs of these algorithms in an
experimental branch of the \EasyCrypt proof assistant supporting
approximate probabilistic liftings.
\iftpdp
An extended version of this paper~\citep{BGGHS16} is available at
\url{http://arxiv.org/abs/1601.05047}.
\fi

\iftpdp\else
\section{Differential privacy}\label{sec:diffpriv}
In this section, we review the basic tools of differential privacy,
and we present the algorithm Above Threshold, which forms the main
subroutine of the Sparse Vector algorithm.

\subsection{Basics}
The basic definition of differential privacy is due to \citet{DMNS06}.
\begin{definition}[Differential privacy] 
A probabilistic computation $M:A\rightarrow\Dist(B)$ satisfies
$(\epsilon,\delta)$-\emph{differential privacy} w.r.t.\, an adjacency
relation $\Phi \subseteq A \times A$ if for every pair of inputs $a, a'\in A$
such that $a~\Phi~a'$ and every subset of outputs $E \subseteq B$, we have
\[
  \Pr_{y\leftarrow M a}[y \in E]
  \leq \exp(\epsilon) \Pr_{y\leftarrow M a'} {[y \in E]} + \delta .
\]
When $\delta = 0$, we say that $M$ is $\epsilon$-\emph{differentially private}.
\end{definition}
Intuitively, the probabilistic condition ensures that any two inputs
satisfying the adjacency relation $\Phi$ result in similar
distributions over outputs. The relation $\Phi$ models which pairs of
databases should be protected, i.e., what data should be nearly
indistinguishable. While it may not be obvious from the definition,
differential privacy has a number of features that allow simple
construction of private algorithms with straightforward proofs of
privacy. Specifically, the vast majority of differential privacy
proofs use two basic tools: private primitives and composition
theorems.

\paragraph*{Private primitives}
These components form the building blocks of private algorithms.  The most basic
example is the \emph{Laplace mechanism}, which achieves differential privacy for
numerical computations by adding probabilistic noise to the output. We will work
with the discrete version of this mechanism throughout the paper.
\begin{definition}[Laplace mechanism \citep{DMNS06}]
  Let $\epsilon>0$. The \emph{(discrete) Laplace mechanism}
  $\Lap_\epsilon:\mathbb{Z}\rightarrow \SDist(\mathbb{Z})$ is defined by
  $\Lap_{\epsilon}(t) = t + \nu$,
  where $\nu \in \mathbb{Z}$ is drawn from the Laplace distribution
  $\mathrm{Laplace}(1/\epsilon)$, i.e.\, with probabilities proportional to
  \[
    \Pr[ \nu ] \propto \exp{(-\epsilon\cdot |\nu|)}.
  \]
\end{definition}
The level of privacy depends on the sensitivity of the query.
\begin{definition}[Sensitivity]
  Let $k\in\mathbb{N}$. A function $F:A\rightarrow\mathbb{Z}$ is
  \emph{$k$-sensitive with respect to $\Phi\subseteq A\times A$} if
  $|F(a_1) - F(a_2)| \leq k$ for every $a_1,a_2\in A$ such that
  $a_1\ \Phi\ a_2$.
\end{definition}
The following theorem shows that $k$-sensitive functions can be made
differentially private through the Laplace mechanism~\citep{DMNS06}.
\begin{thm} \label{thm:lap:priv}
  Assume that $F:A\rightarrow\mathbb{Z}$ is $k$-sensitive with respect
  to $\Phi$. Let $M:A\rightarrow\Dist(\mathbb{Z})$ be the probabilistic
  function that maps $a$ to $\Lap_{\epsilon}(F(a))$. Then $M$ is
  $k\cdot \epsilon$-differentially private with respect to $\Phi$.
\end{thm}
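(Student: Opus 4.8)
The plan is to reduce differential privacy to a pointwise bound on the output probabilities and then integrate over the event. First I would fix inputs $a_1, a_2 \in A$ with $a_1\ \Phi\ a_2$ and an arbitrary $E \subseteq \mathbb{Z}$. Since the additive slack in the conclusion is $\delta = 0$, it suffices to prove the stronger pointwise statement
\[
  \Pr_{y \leftarrow M a_1}[y = n] \leq \exp(k\epsilon) \cdot \Pr_{y \leftarrow M a_2}[y = n]
  \qquad \text{for every } n \in \mathbb{Z},
\]
and then sum this inequality over $n \in E$, using $\Pr_{y \leftarrow M a_i}[y \in E] = \sum_{n \in E} \Pr_{y \leftarrow M a_i}[y = n]$.

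For the pointwise bound I would unfold the Laplace mechanism: $\Pr_{y \leftarrow M a_i}[y = n] = \Pr[\nu = n - F(a_i)]$ for $\nu \sim \mathrm{Laplace}(1/\epsilon)$, so this probability equals $C \cdot \exp(-\epsilon\,|n - F(a_i)|)$ where $C = \left( \sum_{m \in \mathbb{Z}} \exp(-\epsilon\,|m|) \right)^{-1}$ is the normalizing constant, which notably does not depend on the shift $F(a_i)$. Taking the ratio of the two probabilities, the constant $C$ cancels and we obtain
\[
  \frac{\Pr_{y \leftarrow M a_1}[y = n]}{\Pr_{y \leftarrow M a_2}[y = n]}
  = \exp\!\left( \epsilon \big( |n - F(a_2)| - |n - F(a_1)| \big) \right).
\]
By the triangle inequality, $|n - F(a_2)| - |n - F(a_1)| \leq |F(a_1) - F(a_2)|$, and $k$-sensitivity of $F$ with respect to $\Phi$ together with $a_1\ \Phi\ a_2$ bounds the right-hand side by $k$. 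Hence the ratio is at most $\exp(k\epsilon)$, which is exactly the claimed pointwise inequality.

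I do not expect a genuine obstacle here; the argument is elementary. The only points that need a little care are (i) that the division above is legitimate because the discrete Laplace density is strictly positive at every point of $\mathbb{Z}$, so $\Pr_{y \leftarrow M a_2}[y = n]$ is never zero, and (ii) that the reasoning is insensitive to whether the mechanism's output is viewed as a distribution or a subdistribution, since $M$ merely translates a fixed distribution by $F(a)$. Assembling the pieces---pointwise ratio bound, then summation over $E$---yields $(k\epsilon, 0)$-differential privacy of $M$ with respect to $\Phi$.
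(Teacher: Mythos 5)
Your proof is correct, and it is the standard textbook argument: bound the ratio of the two output probabilities pointwise via the reverse triangle inequality (together with $k$-sensitivity), then sum over the event $E$. There is nothing in the paper to compare against: Theorem~\ref{thm:lap:priv} is stated and attributed to \citet{DMNS06} with no proof given, and its lifting-form restatement (Proposition~\ref{prop:lap}) is likewise asserted without argument. Your two care points are well placed: the normalizing constant of the discrete Laplace distribution is shift-invariant and cancels in the ratio, the density is strictly positive at every integer so the division is legitimate, and since $\Lap_\epsilon$ has type $\mathbb{Z}\to\SDist(\mathbb{Z})$ (full distributions in this paper's notation), no subdistribution edge case arises.
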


Another private primitive is the Exponential mechanism, which is the tool of
choice when the desired output is non-numeric. While this mechanism is often
taken as a primitive construct, we will see in \Cref{sec:exp-mech} how to verify its privacy.

\paragraph*{Composition theorems} These tools prove the
privacy of a combination of private components, significantly
simplifying the privacy analysis. The most commonly instance,
by far, is the powerful \emph{sequential composition theorem}.
\begin{thm}[Sequential composition~\citep{dwork2006our}]
  Let $M : D \to \Dist(R)$ be an $(\epsilon,\delta)$-private computation, and
  let $M': D \to R \to \Dist(R')$ be an $(\epsilon',\delta')$-private
  computation in the first argument for any fixed value of the second argument.
  Then, the function
  \[
    d \mapsto bind\ M(d)\ M'(d)
  \]
  is $(\epsilon + \epsilon', \delta + \delta')$-private.
\end{thm}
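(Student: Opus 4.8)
\emph{Proof plan.} The plan is to unfold the definitions of $(\epsilon,\delta)$-privacy and of $\mathrm{bind}$, reduce the goal to a scalar inequality between expectations of $[0,1]$-valued observables, and then apply the two privacy hypotheses in sequence --- using a clamping trick so that the two additive errors add rather than compound. Concretely, fix adjacent inputs $d\ \Phi\ d'$ and an arbitrary output event $E \subseteq R'$; by the definition of privacy it suffices to bound $\Pr_{y \leftarrow \mathrm{bind}\ M(d)\ M'(d)}[y \in E]$ by $\exp(\epsilon+\epsilon')\,\Pr_{y \leftarrow \mathrm{bind}\ M(d')\ M'(d')}[y \in E] + \delta + \delta'$. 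Writing $\mu_d := M(d)$ and, for $r \in R$, setting $p_d(r) := \Pr_{y \leftarrow M'(d)(r)}[y \in E] \in [0,1]$, the semantics of $\mathrm{bind}$ makes the left-hand side equal to $\mathbb{E}_{r \leftarrow \mu_d}[p_d(r)]$, so the goal becomes
\[
  \mathbb{E}_{r \leftarrow \mu_d}[p_d(r)] \;\le\; \exp(\epsilon+\epsilon')\,\mathbb{E}_{r \leftarrow \mu_{d'}}[p_{d'}(r)] + \delta + \delta'.
\]

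First I would record the standard fact that $(\epsilon,\delta)$-privacy lifts from events to $[0,1]$-valued functions: for any $f : R \to [0,1]$, the layer-cake identity $\mathbb{E}_{r \leftarrow \nu}[f(r)] = \int_0^1 \Pr_{r \leftarrow \nu}[f(r) > t]\,dt$, together with the event bound of the definition applied at each threshold $t$, gives $\mathbb{E}_{r \leftarrow M(d)}[f(r)] \le \exp(\epsilon)\,\mathbb{E}_{r \leftarrow M(d')}[f(r)] + \delta$ (and similarly for $M'$); sub-distributions cause no trouble, as having total mass at most $1$ only helps. Next, the assumed privacy of $M'$ in its first argument, instantiated at $E$ with the second argument fixed to each $r$, yields the pointwise bound $p_d(r) \le \exp(\epsilon')\,p_{d'}(r) + \delta'$. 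The clamping step is then: since $p_d(r) \le 1$, the function $q(r) := \min\{\,p_d(r),\ \exp(\epsilon')\,p_{d'}(r)\,\}$ is again $[0,1]$-valued, and the pointwise bound gives $p_d(r) \le q(r) + \delta'$, hence $\mathbb{E}_{r \leftarrow \mu_d}[p_d(r)] \le \mathbb{E}_{r \leftarrow \mu_d}[q(r)] + \delta'$. Applying the lifted privacy of $M$ to $q$, and then using $q(r) \le \exp(\epsilon')\,p_{d'}(r)$, gives $\mathbb{E}_{r \leftarrow \mu_d}[q(r)] \le \exp(\epsilon)\,\mathbb{E}_{r \leftarrow \mu_{d'}}[q(r)] + \delta \le \exp(\epsilon+\epsilon')\,\mathbb{E}_{r \leftarrow \mu_{d'}}[p_{d'}(r)] + \delta$. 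Chaining the two inequalities establishes the displayed bound, and hence the theorem.

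The one genuinely delicate point --- and the main obstacle --- is arranging the additive error to be exactly $\delta + \delta'$ rather than the weaker $\exp(\epsilon')\delta + \delta'$ that a naive substitution of $p_d(r) \le \exp(\epsilon')p_{d'}(r) + \delta'$ under the expectation would produce: the outer factor $\exp(\epsilon)$ coming from the privacy of $M$ must never multiply $\delta'$, and this is precisely what the clamping (passing to the truncated, still $[0,1]$-valued $q$ before invoking $M$'s privacy) secures. Everything else is routine --- unfolding the monadic $\mathrm{bind}$ and an application of Tonelli's theorem for the layer-cake identity.
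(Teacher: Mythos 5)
The paper does not prove this theorem: it is stated as a background result and cited to Dwork et al.\ (\emph{Our Data, Ourselves}), so there is no in-paper proof to compare against. Your proof is nevertheless correct, and the one place where a slip would be easy to make --- the clamping step --- is handled properly. To confirm it: $q(r) := \min\{p_d(r),\, \exp(\epsilon')p_{d'}(r)\}$ is indeed $[0,1]$-valued (it is bounded above by $p_d(r)\le 1$), $p_d\le q+\delta'$ follows from a two-case check using the pointwise privacy of $M'$, the layer-cake lemma $\mathbb{E}_\nu[f]=\int_0^1\Pr_\nu[f>t]\,dt$ correctly lifts $(\epsilon,\delta)$-privacy from events to $[0,1]$-valued observables even for sub-distributions, and the chain
\[
\mathbb{E}_{\mu_d}[p_d]\le\mathbb{E}_{\mu_d}[q]+\delta'\le e^{\epsilon}\mathbb{E}_{\mu_{d'}}[q]+\delta+\delta'\le e^{\epsilon+\epsilon'}\mathbb{E}_{\mu_{d'}}[p_{d'}]+\delta+\delta'
\]
gives the stated bound. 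Your diagnosis of the subtlety is also right: substituting the pointwise bound $p_d\le e^{\epsilon'}p_{d'}+\delta'$ directly under the expectation and then invoking $M$'s privacy on $p_{d'}$ would yield the weaker $e^{\epsilon'}\delta+\delta'$ (and the opposite order yields $\delta+e^{\epsilon}\delta'$); truncating to $q$ before invoking $M$'s privacy is exactly what keeps the exponential factor off the $\delta$ terms. It is worth noting that the paper's own machinery suggests an alternative route --- compose the approximate liftings $\alift{=}{(\epsilon,\delta)}$ and $\alift{=}{(\epsilon',\delta')}$ using the monadic bind, as the \textsc{Seq} rule does at the logical level and Proposition~\ref{prop:dp:div} does at the semantic level --- which is arguably more in the spirit of the paper's thesis, but your direct argument via expectations and clamping is a clean, self-contained, and correct alternative.
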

One specific form of composition is post-processing. Informally, the
post-processing theorem states that the output of a differentially
private computation can be transformed while remaining private, so
long as the transformation does not depend on the private data
directly; such a transformation can be thought of as $(0,0)$-differentially
private.

\begin{figure}
\[
  \begin{array}{l}
    \Ass{i}{1}; \Ass{r}{|Q|+1};\\
    \Rand{T}{\Lap_{\epsilon/2}(t)}; \\
    \WWhile{i<|Q|}{} \\
    \quad \Rand{S}{\Lap_{\epsilon/4}(\mathsf{evalQ}(Q[i],d))};  \\
    \quad \mathsf{if}~ (T\leq S~ \land r = |Q|+ 1)~\mathsf{then}~\Ass{r}{i}; \\
    \quad \Ass{i}{i + 1}; \\
    \mathsf{return}~r
  \end{array}
\]
\caption{The Above Threshold algorithm}\label{fig:abovet}
\end{figure}

\subsection{Above Threshold}
\label{ssec:sv-intro}
While most private algorithms can be analyzed using composition
theorems and proofs of private primitives, some algorithms require
more intricate proofs. To give an example, we consider the Above
Threshold algorithm, which is the core of the Sparse
Vector technique.\iftpdp\else\footnote{%
  As this algorithm was not formally proposed
  in a canonical work, there exist different variants of the
  algorithm. Some variants take as input a stream rather than a list
  of queries, and/or output the result of a noisy query, rather than
  its index; see the final remark in~\Cref{app:sparse} for further
  discussion.}\fi{}
The Sparse Vector algorithm takes as input a database
$d$, a list of numerical queries $Q$, a threshold $t$, and a natural
number $k$, and privately selects the first $k$ queries from $Q$ whose
output on $d$ are approximately above the threshold. The Above
Threshold algorithm corresponds to the case $k=1$.

The code of the algorithm is given in \Cref{fig:abovet}. In words,
\textsf{AboveT} computes a noisy version $T$ of the threshold $t$,
computes for every query $q$ in the list $Q$ a noisy version $S$ of
$q(d)$, and returns the index of the first query $q$ such that $T\leq
S$ or a default value if there is no such query. It is easy to see
that $(\epsilon,0)$-differential privacy of \textsf{AboveT} directly
implies $(k\cdot \epsilon, 0)$-differential privacy of Sparse Vector,
since we can simply run \textsf{AboveT} $k$ times in sequence and
apply the sequential composition theorem.

If we try applying the sequential composition theorem (with the privacy of the
Laplace mechanism) to \textsf{AboveT} we can show $(|Q|\cdot \epsilon,
0)$-differential privacy when all queries in $Q$ are $1$-sensitive, where $|Q|$
denotes the length of the list $Q$. However, a sophisticated analysis gives a
more precise privacy guarantee.
\begin{thm}[see, e.g., \citet{DR14}]
Assuming all queries in $Q$ are $1$-sensitive, \textsf{AboveT} is
$(\epsilon,0)$-differentially private.
\end{thm}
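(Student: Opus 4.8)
The plan is to give a \emph{pointwise coupling} proof, using the three new ingredients of \Saprhlp: the rule that decomposes $(\epsilon,0)$-privacy separately over each output, the ``dual'' Laplace rule (equal inputs can be forced to differ, at a price), and the ``same-noise'' Laplace rule (adding identical noise preserves distances for free). First I would fix an arbitrary possible return value $i^{\star}$ --- either $1 \le i^{\star} \le |Q|$, or the default $i^{\star} = |Q|+1$ --- and an arbitrary adjacent pair $d, d'$, and reduce the goal to the $(\epsilon,0)$ bound
\[
  \Pr_{r \leftarrow \textsf{AboveT}(d)}[\,r = i^{\star}\,]
  \;\le\; \exp(\epsilon)\,\Pr_{r \leftarrow \textsf{AboveT}(d')}[\,r = i^{\star}\,] .
\]
In \Saprhlp this is witnessed by an $(\epsilon,0)$-approximate lifting of the two runs (with $d$ on the left and $d'$ on the right) for the postcondition $r\sidel = i^{\star} \Rightarrow r\sider = i^{\star}$. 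Since $\delta = 0$, summing this inequality over all $i^{\star}$ recovers the $(\epsilon,0)$ bound for arbitrary output events, i.e., differential privacy.

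Next I would build the coupling witnessing that lifting. For the threshold draw $\Rand{T}{\Lap_{\epsilon/2}(t)}$ the input $t$ is the same public value on both sides, so I apply the dual Laplace rule to couple the samples so that $T\sider = T\sidel + 1$; this costs $(\epsilon/2)\cdot 1 = \epsilon/2$, and it is the only threshold expenditure. Inside the loop I carry the invariant: as long as the left run is still at an index $< i^{\star}$, we have $r\sidel = r\sider = |Q|+1$ and $T\sider = T\sidel + 1$. In an iteration at index $j < i^{\star}$ I couple the query draw $\Rand{S}{\Lap_{\epsilon/4}(\mathsf{evalQ}(Q[j],\cdot))}$ with the \emph{same noise} on both sides, at zero cost; $1$-sensitivity gives $S\sider \le S\sidel + 1$, so if the test $T \le S$ fails on the left ($T\sidel > S\sidel$, hence $T\sidel \ge S\sidel + 1$ over $\mathbb{Z}$) then $T\sider = T\sidel + 1 \ge S\sidel + 2 > S\sider$, and the test fails on the right too --- the invariant survives and no budget is spent. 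In the iteration at index $i^{\star}$ (when $i^{\star} \le |Q|$) I instead couple the query draws so that $S\sider = S\sidel + 1$: relative to the sampled noise this is a shift of size at most $2$ (one unit for the intended gap, at most one more for query sensitivity), costing $(\epsilon/4)\cdot 2 = \epsilon/2$. On the branch where the left run sets $\Ass{r\sidel}{i^{\star}}$ we have $T\sidel \le S\sidel$, whence $T\sider = T\sidel + 1 \le S\sidel + 1 = S\sider$, and $r\sider$ is set to $i^{\star}$ as well (the guard's $r = |Q|+1$ conjunct holds on the right by the invariant); on the other branch the left run never returns $i^{\star}$ and the postcondition is vacuous. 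Subsequent iterations reuse the zero-cost same-noise coupling and change nothing, since both $r$'s are now fixed. The default case $i^{\star} = |Q|+1$ is the same but simpler: every iteration uses the zero-cost coupling, and the invariant already yields $r\sidel = |Q|+1 \Rightarrow r\sider = |Q|+1$.

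Finally I would tally the budget: $\epsilon/2$ on the threshold plus $\epsilon/2$ on the single iteration at $i^{\star}$ (or nothing, when $i^{\star} = |Q|+1$), which is at most $\epsilon$; and $\delta$ stays $0$ because every coupling used is exact (all are Laplace shifts). Soundness of \Saprhlp with respect to approximate liftings, together with the characterization of $(\epsilon,0)$-differential privacy as an approximate lifting of the equality relation, then gives the theorem.

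The step I expect to be the main obstacle is the loop reasoning. It requires a \textbf{while} rule that allocates \emph{per-iteration} privacy costs --- zero to every iteration except the one at $i^{\star}$, and $\epsilon/2$ there --- and that behaves correctly when we are effectively conditioning on the left run returning $i^{\star}$. The genuine insight, as opposed to bookkeeping, is that the pre-$i^{\star}$ iterations are truly free (the same noise suffices, by monotonicity of the test under the $1$-unit threshold gap) and that exactly one unit of threshold shift plus the $1$-sensitivity of the query is enough to force agreement from $i^{\star}$ onward; keeping the directions of the two Laplace shifts consistent so that both the ``found at $i^{\star}$'' and the ``never found'' cases go through is the delicate point. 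A minor secondary point is the loop guard's $r = |Q|+1$ conjunct and the default output, both of which are handled uniformly by the invariant $r\sidel = r\sider = |Q|+1$.
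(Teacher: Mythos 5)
Your proposal is correct and follows essentially the same route as the paper: reduce to the pointwise lifting $r\sidel = i \Rightarrow r\sider = i$ via \rname{Forall-Eq}, shift the threshold by $+1$ with \rname{LapGen} at cost $\epsilon/2$, use \rname{LapNull} (same noise) at zero cost for iterations before $i$, shift the critical query sample by $+1$ with \rname{LapGen} at cost $\epsilon/2$, and free-ride through the remaining iterations. The only small imprecision is that your pre-$i^{\star}$ invariant $r\sidel = r\sider = |Q|+1$ should really be the disjunction the paper uses, $(r\sidel = |Q|+1 \Rightarrow r\sider = |Q|+1) \land (r\sidel = |Q|+1 \lor r\sidel < i)$, to cover the left run exiting early — but since you explicitly note that case makes the postcondition vacuous, the substance is there.
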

In other words, \textsf{AboveT} is provably $\epsilon$-differentially
private, \emph{independent of the number of queries}. This is a
remarkable feature of the Above Threshold algorithm.
\fi

\section{Generalized probabilistic liftings}
To verify advanced algorithms like \textsf{AboveT}, we will leverage the power
of \emph{approximate probabilistic liftings}. In a nutshell, our proofs will
replace the sequential composition theorem of differential privacy---which we've
seen is not enough to verify our target examples---with the more general
composition principle of liftings.  This section reviews existing notions of
(approximate) probabilistic liftings and introduces proof principles for
establishing their existence. Most of these proof principles are new, including
those for equality (\Cref{prop:pw:eq}), differential privacy
(\Cref{prop:pw:dp}), the Laplace mechanism
(\Cref{prop:lap:gen,prop:lap:nogrow}), and the one-sided Laplace mechanism
\iftpdp\else(\Cref{prop:olap:gen,prop:olap:nogrow})\fi.

\iftpdp\else
To avoid measure-theoretic issues, we base our technical development
on sub-distributions over discrete sets (\emph{discrete
  sub-distributions}). For simplicity, we will work with distributions
over the integers when considering distributions over numeric values.

We start by reviewing the standard definition of sub-distributions.
Let $B$ be a countable set. A function $\mu:B\rightarrow \mathbb{R}^{\geq 0}$ is
\begin{itemize}
\item a \emph{sub-distribution} over $B$ if
  $\sum_{b\in\supp(\mu)} \mu(b)\leq 1$; and

\item a \emph{distribution} over $B$ if
  $\sum_{b\in\supp(\mu)} \mu(b)= 1$.
\end{itemize}
As usual, the \emph{support} $\supp(\mu)$ is the subset of $B$ with
non-zero weight under $\mu$.  Let $\Dist(B)$ and $\SDist(B)$ denote
the sets of discrete sub-distributions and distributions respectively
over $B$. Equality of distributions is defined as pointwise equality
of functions.

Probabilistic liftings and couplings are defined in terms of a distribution over
products, and its
\emph{marginal distributions}. Formally, the first and second marginals
of a sub-distribution $\mu\in\Dist(B_1\times B_2)$ are simply the
projections: the sub-distributions $\pi_1(\mu)\in\Dist(B_1)$ and
$\pi_2(\mu)\in\Dist(B_2)$ given by
\[
\pi_1(\mu)(b_1)=\sum_{b_2\in B_2} \mu(b_1,b_2) \qquad
\pi_2(\mu)(b_2)=\sum_{b_1\in B_1} \mu(b_1,b_2) .
\]
\fi

\subsection{Probabilistic couplings and liftings}
Probabilistic couplings and liftings are standard tools 
in probability theory, and semantics and verification,
respectively. We
present their definitions to highlight their similarities
before discussing some useful consequences.
\begin{definition}[Coupling]
There is a \emph{coupling} between two sub-distributions
$\mu_1\in\Dist(B_1)$ and $\mu_2\in\Dist(B_2)$ if there exists a sub-distribution
(called the \emph{witness}) $\mu\in\Dist (B_1\times B_2)$ s.t.\,
$\pi_1(\mu)=\mu_1$ and $\pi_2(\mu)=\mu_2$. 
\end{definition} 
Probabilistic liftings are a special class of couplings.
\begin{definition}[Lifting]
Two sub-distributions $\mu_1\in\Dist(B_1)$ and $\mu_2\in\Dist(B_2)$
are related by the \emph{(probabilistic) lifting} of $\Psi\subseteq B_1\times
B_2$, written
$\mu_1 \lift{\Psi} \mu_2$, if there exists a coupling $\mu\in
\Dist(B_1\times B_2)$ of $\mu_1$ and $\mu_2$ such that $\supp(\mu)
\subseteq \Psi$.
\end{definition}
Probabilistic liftings have many useful consequences. For example,
$\mu_1~\lift{=}~\mu_2$ holds exactly when the
sub-distributions $\mu_1$ and $\mu_2$ are equal. Less trivially,
liftings can bound the probability of one event
by the probability of another event. This observation is useful for formalizing
reduction-based cryptographic proofs.
\begin{prop}[\citet{BartheGZ09}]\label{prop:imp} Let $E_1\subseteq B_1$,
$E_2\subseteq B_2$, $\mu_1\in\Dist(B_1)$ and $\mu_2\in\Dist(B_2)$.
Define
\[
\Psi = \{ (x_1,x_2) \in B_1\times B_2 \mid x_1\in E_1 \Rightarrow x_2\in E_2 \}
.
\]
If $\mu_1 \lift{\Psi} \mu_2$, then
\[
  \Pr_{x_1\leftarrow\mu_1} [x_1\in E_1] \leq \Pr_{x_2\leftarrow\mu_2} [x_2\in
  E_2] .
\]
\end{prop}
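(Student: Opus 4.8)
The plan is to unfold the lifting $\mu_1 \lift{\Psi} \mu_2$ into an explicit witness sub-distribution over the product, and then compare the two probabilities by a simple monotonicity argument on its marginals.

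First I would fix a witness $\mu \in \Dist(B_1 \times B_2)$ for the lifting, so that $\pi_1(\mu) = \mu_1$, $\pi_2(\mu) = \mu_2$, and $\supp(\mu) \subseteq \Psi$. Using the definition of the first marginal, I would rewrite
\[
  \Pr_{x_1 \leftarrow \mu_1}[x_1 \in E_1]
  = \sum_{x_1 \in E_1} \mu_1(x_1)
  = \sum_{x_1 \in E_1} \sum_{x_2 \in B_2} \mu(x_1, x_2)
  = \sum_{(x_1,x_2) \in E_1 \times B_2} \mu(x_1, x_2),
\]
and symmetrically express $\Pr_{x_2 \leftarrow \mu_2}[x_2 \in E_2]$ as $\sum_{(x_1,x_2) \in B_1 \times E_2} \mu(x_1, x_2)$. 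All these sums are over nonnegative terms and converge since $\mu$ is a sub-distribution, so the rearrangements are harmless.

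Next I would observe that any pair $(x_1,x_2)$ contributing a nonzero term on the left-hand side lies in $\supp(\mu) \subseteq \Psi$, hence satisfies the implication $x_1 \in E_1 \Rightarrow x_2 \in E_2$; combined with $x_1 \in E_1$ this forces $x_2 \in E_2$. Therefore
\[
  \{(x_1,x_2) \in \supp(\mu) \mid x_1 \in E_1\} \subseteq \{(x_1,x_2) \in \supp(\mu) \mid x_2 \in E_2\},
\]
and since $\mu$ is nonnegative and vanishes outside its support, summing $\mu$ over the smaller index set is bounded by the sum over the larger one. This gives $\Pr_{x_1 \leftarrow \mu_1}[x_1 \in E_1] \le \Pr_{x_2 \leftarrow \mu_2}[x_2 \in E_2]$, as required.

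There is no genuine obstacle; the only points requiring mild care are the bookkeeping with $\supp(\mu)$ to justify restricting the double sums, and using the implication defining $\Psi$ in the correct direction. In fact the argument also explains why the asymmetric shape of $\Psi$ (an implication rather than a conjunction) is exactly what makes the probability comparison go through.
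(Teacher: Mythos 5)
Your proof is correct, and it is the standard argument: unfold the witness of the lifting, express both marginal probabilities as sums of the witness over $E_1 \times B_2$ and $B_1 \times E_2$ respectively, and use the support condition $\supp(\mu)\subseteq\Psi$ together with the implication shape of $\Psi$ to get a containment of index sets and hence a monotonicity inequality. Note that the paper itself does not prove this proposition --- it is stated as a recalled result and cited to prior work --- so there is no in-paper proof to compare against; your argument is the one you would find in the cited reference and is the natural one.

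One small remark: since distributions here are discrete with nonnegative mass, there is no subtlety in the sum rearrangements beyond what you noted. Your closing observation about why the implication (rather than a conjunction) in $\Psi$ is exactly what is needed is a good one, and it is consistent with how the paper later exploits asymmetric relations (e.g., in Proposition~\ref{prop:pw:dp} and the rule \rname{Forall-Eq}).
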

One key observation for our approach is that this result can also be used to
prove equality between distributions in a pointwise style.
\begin{prop}[Equality by pointwise lifting]\label{prop:pw:eq}\mbox{}
\begin{itemize}
\item Let $\mu_1,\mu_2\in\SDist(B)$. For every $b\in B$,
define
$$\Psi_b =\{ (x_1,x_2)\in B\times B\mid ~ x_1= b \Rightarrow x_2 = b\} .$$
If $\mu_1 ~\lift{\Psi_b}~ \mu_2$ for all $b\in B$, then $\mu_1 = \mu_2$.

\item Let $\mu_1,\mu_2\in\Dist(B)$. For every $b\in B$, define
\[
\Psi_b =\{ (x_1,x_2)\in B\times B\mid ~ x_1= b \Leftrightarrow x_2 = b\} .
\]
If $\mu_1 ~\lift{\Psi_b} ~ \mu_2$ for all $b\in B$, then $\mu_1 = \mu_2$.
\end{itemize}
\end{prop}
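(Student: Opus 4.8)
The plan is to read off, for each $b\in B$, a numerical relation between $\mu_1(b)$ and $\mu_2(b)$ directly from the witness of the lifting $\mu_1\;\lift{\Psi_b}\;\mu_2$, and then conclude. Both bullets use the same marginal computation; the only difference is whether that computation yields an inequality (which must then be upgraded to an equality using the total-mass hypothesis) or already an equality.

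For the first bullet, fix $b\in B$ and let $\mu^{(b)}\in\Dist(B\times B)$ be the witness of $\mu_1\;\lift{\Psi_b}\;\mu_2$, so $\pi_1(\mu^{(b)})=\mu_1$, $\pi_2(\mu^{(b)})=\mu_2$, and $\supp(\mu^{(b)})\subseteq\Psi_b$. Since $(b,x_2)\in\Psi_b$ forces $x_2=b$, every pair in the support of $\mu^{(b)}$ with first component $b$ is $(b,b)$; hence $\mu_1(b)=\pi_1(\mu^{(b)})(b)=\mu^{(b)}(b,b)$. On the other side $(b,b)$ is one of the pairs summed in $\pi_2(\mu^{(b)})(b)$, so $\mu_2(b)=\pi_2(\mu^{(b)})(b)\geq\mu^{(b)}(b,b)=\mu_1(b)$. (Equivalently, apply \Cref{prop:imp} with $E_1=E_2=\{b\}$, observing that the relation $\Psi$ it produces is exactly $\Psi_b$, to get $\Pr_{x_1\leftarrow\mu_1}[x_1\in\{b\}]\leq\Pr_{x_2\leftarrow\mu_2}[x_2\in\{b\}]$, i.e. $\mu_1(b)\leq\mu_2(b)$.) This holds for every $b$; since $\mu_1,\mu_2\in\SDist(B)$ are proper distributions, $\sum_b\mu_1(b)=\sum_b\mu_2(b)=1$, and a pointwise inequality between nonnegative summable families with equal (finite) sums is an equality at every index, so $\mu_1(b)=\mu_2(b)$ for all $b$, i.e. $\mu_1=\mu_2$.

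For the second bullet the argument is the same but sharper, because the biconditional makes the support constraint symmetric. Fix $b$, take the witness $\mu^{(b)}$ of $\mu_1\;\lift{\Psi_b}\;\mu_2$. Now $(x_1,x_2)\in\Psi_b$ together with $x_1=b$ forces $x_2=b$, so $\mu_1(b)=\pi_1(\mu^{(b)})(b)=\mu^{(b)}(b,b)$ as before; symmetrically $(x_1,x_2)\in\Psi_b$ together with $x_2=b$ forces $x_1=b$, so $\mu_2(b)=\pi_2(\mu^{(b)})(b)=\mu^{(b)}(b,b)$. Hence $\mu_1(b)=\mu_2(b)$ for every $b$, and $\mu_1=\mu_2$; note that no total-mass hypothesis is used here, which is precisely why sub-distributions suffice in this bullet.

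There is no real obstacle. The one place meriting a little care is the implicit monotonicity step if one routes the first bullet through \Cref{prop:imp} rather than unfolding the witness: one must check that $\Psi_b$ from the first bullet literally coincides with the relation $\Psi$ that \Cref{prop:imp} builds from $E_1=E_2=\{b\}$, which it does. Working directly with the witness distribution, as sketched above, sidesteps even that point and gives the cleanest presentation.
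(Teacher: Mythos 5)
Your proof is correct and follows essentially the same route as the paper: reduce to the pointwise inequality $\mu_1(b)\leq\mu_2(b)$ via \Cref{prop:imp} (you also give the equivalent direct witness computation), then use the equal-total-mass observation to upgrade the inequality to equality in the first bullet. You additionally spell out the second bullet, which the paper leaves as ``follows similarly,'' and you correctly observe that no total-mass step is needed there since the biconditional already yields $\mu_1(b)=\mu^{(b)}(b,b)=\mu_2(b)$ directly.
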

\iftpdp\else
\begin{proof}
We prove the first item; the second item follows similarly.

First, a simple observation: two distributions $\mu_1$
and $\mu_2$ are equal iff $\mu_1(b)\leq \mu_2(b)$ for every $b\in
B$. Indeed, suppose that $\mu_1(\bar{b})\neq \mu_2(\bar{b})$ for some
$\bar{b}\in B$.  Then, $\mu_1(\bar{b})< \mu_2(\bar{b})$, so
$$
\sum_{b\in B} \mu_1(b) <\sum_{b\in B} \mu_2(b) ,
$$
contradicting the fact that $\mu_1$ and $\mu_2$ are distributions:
$$
\sum_{b\in B} \mu_1(b)=\sum_{b\in B} \mu_2(b)=1 .
$$
Thus, in order to show $\mu_1=\mu_2$, it is sufficient to prove
$\Pr_{x\leftarrow \mu_1} {[x=b]} \leq \Pr_{x\leftarrow \mu_2}
{[x=b]}$ for every $b\in B$. These inequalities follow from
\Cref{prop:imp}.
\end{proof}
\fi

\subsection{Approximate liftings}
It has previously been shown that differential privacy follows from an
approximate version of liftings~\citep{BartheKOZ13}. Our
presentation follows subsequent refinements by~\citet{BartheO13}. We start by
defining a notion of distance between sub-distributions. 
\begin{definition}[\citet{BartheKOZ13}]
Let $\epsilon\geq 0$. The $\epsilon$-\emph{DP divergence}
$\Delta_{\epsilon}(\mu_1,\mu_2)$ between two
sub-distributions $\mu_1\in\Dist(B)$ and $\mu_2\in\Dist(B)$ is defined as
$$\sup_{E\subseteq B}
\left(\Pr_{x\leftarrow \mu_1}[x\in E] - \exp(\epsilon) 
\Pr_{x\leftarrow \mu_2} {[x\in E]}\right)$$
\end{definition}

The following proposition relates $\epsilon$-DP divergence with $(\epsilon,
\delta)$-differential privacy.
\begin{prop}[\citet{BartheKOZ13}]\label{prop:dp:div}\mbox{}
A probabilistic computation $M:A\rightarrow\Dist(B)$ is
$(\epsilon, \delta)$-differentially private w.r.t.\, an adjacency relation
$\Phi$ iff 
$$\Delta_{\epsilon}(M(a),M(a'))\leq\delta$$ 
for every two adjacent inputs $a$ and $a'$ (i.e.\, such that $a~\Phi~a'$). 
\end{prop}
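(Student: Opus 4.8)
The plan is to prove the equivalence by unfolding both sides to the same family of scalar inequalities, related through a supremum. Throughout, I would fix two adjacent inputs $a$ and $a'$ (that is, $a~\Phi~a'$), abbreviate $\mu_1 = M(a)$ and $\mu_2 = M(a')$, and set
\[
  f(E) \;=\; \Pr_{x\leftarrow\mu_1}[x\in E] \;-\; \exp(\epsilon)\,\Pr_{x\leftarrow\mu_2}[x\in E]
\]
for $E\subseteq B$, so that by definition $\Delta_\epsilon(\mu_1,\mu_2) = \sup_{E\subseteq B} f(E)$. Since the subtracted term is nonnegative we have $f(E)\le\Pr_{x\leftarrow\mu_1}[x\in E]\le 1$ for every $E$, and $f(\emptyset) = 0$; hence this supremum is a well-defined real number in $[0,1]$.

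First I would record the elementary observation that for a real-valued function $f$ bounded above and any $\delta\ge 0$, one has $\sup_{E} f(E)\le\delta$ if and only if $f(E)\le\delta$ for every $E$. The forward implication is immediate from $f(E)\le\sup_{E'} f(E')$; the reverse implication just says that $\delta$ is an upper bound for $\{\, f(E) \mid E\subseteq B \,\}$, hence dominates the least upper bound.

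Next I would chain the definitions. By the definition of $(\epsilon,\delta)$-differential privacy, $M$ is $(\epsilon,\delta)$-private w.r.t.\ $\Phi$ exactly when, for every adjacent pair $a~\Phi~a'$ and every $E\subseteq B$, $\Pr_{y\leftarrow M(a)}[y\in E] \le \exp(\epsilon)\Pr_{y\leftarrow M(a')}[y\in E] + \delta$, i.e.\ $f(E)\le\delta$. Applying the observation above with the fixed adjacent pair, this holds for all $E$ if and only if $\sup_{E\subseteq B} f(E) = \Delta_\epsilon(M(a),M(a'))\le\delta$. Quantifying over all adjacent pairs then yields exactly the stated equivalence.

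I do not expect a genuine obstacle: the proposition is essentially a repackaging of the definition of differential privacy in the vocabulary of DP divergence, and the only step carrying any content is the standard fact that bounding a supremum by $\delta$ is the same as uniformly bounding its members by $\delta$, which needs nothing beyond the least-upper-bound property together with the boundedness noted above. The discrete sub-distribution setting is used only implicitly, to guarantee that every $E\subseteq B$ is an admissible event so that all the probabilities in sight are well-defined.
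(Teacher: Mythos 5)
Your proof is correct. The paper does not supply a proof of this proposition — it is stated with a citation to prior work — but your argument is exactly the expected one: unfold the definitions of $(\epsilon,\delta)$-differential privacy and of $\Delta_\epsilon$, and observe that a pointwise bound $f(E)\le\delta$ over all $E\subseteq B$ is equivalent to the bound $\sup_E f(E)\le\delta$ on the (well-defined, since $f$ is bounded above) supremum.
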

We can use DP-divergence to define an approximate version of probabilistic
lifting, called $(\epsilon,\delta)$-\emph{lifting}. We adopt the definition
by~\citet{BartheO13}, which extends to a general class of distances called
$f$-divergences.
\begin{definition}[$(\epsilon,\delta)$-lifting] \label{def:approx-lift}
Two sub-distributions $\mu_1\in\Dist(B_1)$ and $\mu_2\in\Dist(B_2)$
are related by the $(\epsilon,\delta)$-\emph{lifting} of
$\Psi\subseteq B_1\times B_2$, written $\mu_1
\alift{\Psi}{(\epsilon,\delta)} \mu_2$, if there exist two witness
sub-distributions $\mu_L\in\Dist(B_1\times B_2)$ and
$\mu_R\in\Dist(B_1\times B_2)$ such that
\begin{enumerate}
\item $\pi_1(\mu_L)=\mu_1$ and $\pi_2(\mu_R)=\mu_2$;
\item $\supp(\mu_L)\subseteq \Psi$ and  $\supp(\mu_R)\subseteq \Psi$; and
\item $\Delta_\epsilon(\mu_L,\mu_R)\leq\delta$.
\end{enumerate}
\end{definition}
It is relatively easy to see that two sub-distributions $\mu_1$ and
$\mu_2$ are related by $\alift{=}{(\epsilon,\delta)}$ iff
$\Delta_\epsilon(\mu_1,\mu_2)\leq\delta$. Therefore, a probabilistic
computation $M:A\rightarrow\Dist(B)$ is $(\epsilon,
\delta)$-differentially private w.r.t.\, an adjacency relation $\Phi$
iff
\[
  M(a)~\alift{=}{(\epsilon, \delta)}~M(a')
\]
for every two adjacent inputs $a$ and $a'$ (i.e.\, such that
$a~\Phi~a'$).  This fact forms the basis of previous lifting-based
approaches for differential privacy
\citep{BartheKOZ13,BartheO13,BGGHRS15,BGGHKS14}.

A useful preliminary fact is that approximate liftings generalize probabilistic
liftings (which we will sometimes call \emph{exact} liftings).

\begin{prop} \label{prop:alift:lift}
Suppose we are given distributions $\mu_1\in\SDist(B_1)$ and
$\mu_2\in\SDist(B_2)$ and a relation $\Psi\subseteq B_1\times B_2$.
Then, $\mu_1 \lift{\Psi} \mu_2$ if and only if $\mu_1 \alift{\Psi}{(0,0)}
\mu_2$.
\end{prop}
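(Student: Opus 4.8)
The plan is to prove the two implications separately, each by directly exhibiting the witness distribution(s) required by the relevant definition. Both directions are short, essentially because the approximation parameters $(\epsilon,\delta)$ are $(0,0)$, so the DP-divergence condition degenerates to an ordinary comparison of distributions.

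For the forward direction, suppose $\mu_1 \lift{\Psi} \mu_2$, witnessed by a coupling $\mu \in \Dist(B_1\times B_2)$ with $\pi_1(\mu)=\mu_1$, $\pi_2(\mu)=\mu_2$, and $\supp(\mu)\subseteq\Psi$. I would simply take $\mu_L = \mu_R = \mu$ as the two witnesses demanded by \Cref{def:approx-lift}. Conditions (1) and (2) hold immediately, and for condition (3) we have $\Delta_0(\mu,\mu) = \sup_{E}\bigl(\Pr_{x\leftarrow\mu}[x\in E] - \exp(0)\,\Pr_{x\leftarrow\mu}[x\in E]\bigr) = \sup_E 0 = 0 \leq 0$, since $\exp(0)=1$. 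Hence $\mu_1 \alift{\Psi}{(0,0)} \mu_2$.

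For the reverse direction, suppose $\mu_1 \alift{\Psi}{(0,0)} \mu_2$ with witnesses $\mu_L,\mu_R \in \Dist(B_1\times B_2)$. The one mild subtlety is to first note that these witnesses are in fact \emph{total} distributions: the total mass of $\pi_1(\mu_L)$ equals that of $\mu_L$, and $\pi_1(\mu_L)=\mu_1$ has mass $1$ because $\mu_1\in\SDist(B_1)$; likewise for $\mu_R$ via $\pi_2$. Condition (3) gives $\Delta_0(\mu_L,\mu_R)\leq 0$, which by the definition of DP-divergence with $\epsilon=0$ means $\Pr_{x\leftarrow\mu_L}[x\in E] \leq \Pr_{x\leftarrow\mu_R}[x\in E]$ for every event $E$; taking $E$ to range over singletons yields $\mu_L(z)\leq\mu_R(z)$ for all $z\in B_1\times B_2$. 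By the elementary fact recalled in the proof of \Cref{prop:pw:eq} — pointwise domination of one total distribution by another forces equality — we conclude $\mu_L = \mu_R$. Setting $\mu := \mu_L = \mu_R$, we obtain a single witness with $\pi_1(\mu)=\pi_1(\mu_L)=\mu_1$, $\pi_2(\mu)=\pi_2(\mu_R)=\mu_2$, and $\supp(\mu)=\supp(\mu_L)\subseteq\Psi$, which is precisely $\mu_1 \lift{\Psi} \mu_2$.

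The proof is thus essentially a matter of unwinding definitions; the only place where a little care is needed is the observation that the approximate-lifting witnesses are full distributions, so that $\mu_L=\mu_R$ follows from pointwise domination (rather than being stuck with two possibly distinct sub-distributional witnesses). I do not anticipate any real obstacle beyond this bookkeeping.
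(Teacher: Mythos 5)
Your proof is correct and follows essentially the same route as the paper's: take $\mu_L=\mu_R=\mu$ for the forward direction, and for the reverse direction note that the witnesses must be total distributions, unpack $\Delta_0(\mu_L,\mu_R)\le 0$ to pointwise domination, and conclude $\mu_L=\mu_R$ via the observation from the proof of \Cref{prop:pw:eq}. If anything you are slightly more explicit than the paper in justifying \emph{why} the witnesses are total (total mass is preserved by marginals) and in pointing to the auxiliary fact inside the proof of \Cref{prop:pw:eq} rather than its statement.
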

\iftpdp\else
\begin{proof}
  The forward direction is easy: simply define $\mu_L = \mu_R$ to be
  the witness of the exact lift. The reverse direction follows from
  the observations that the witnesses $\mu_L$ and $\mu_R$ are
  necessarily distributions, and that $\Delta_0$ is the total
  variation distance (a.k.a.\, statistical distance) on distributions,
  in particular $\Delta_0(\mu_L,\mu_R)=0$ iff $\mu_L =\mu_R$. To see
  this last point, $\Delta_0(\mu_L,\mu_R)=0$ entails
  \[
    \mu_L(b_1, b_2) \leq \mu_R(b_1, b_2)
  \]
  for every $(b_1, b_2) \in B_1 \times B_2$. So $\mu_L = \mu_R$ by
  \Cref{prop:pw:eq}.
\end{proof}
\fi

The previous results for exact liftings generalize smoothly to
approximate liftings. First, we can generalize \Cref{prop:imp}.
\begin{prop}[\citet{BartheO13}]\label{prop:imp:epsdel} Let 
$E_1\subseteq B_1$, $E_2\subseteq B_2$, $\mu_1\in\Dist(B_1)$ and
  $\mu_2\in\Dist(B_2)$.  Let
$$\Psi = \{ (x_1,x_2) \in B_1\times B_2 \mid x_1\in E_1 \Rightarrow
  x_2\in E_2 \} .$$
If $\mu_1 \alift{\Psi}{(\epsilon,\delta)} \mu_2$, then
$$\Pr_{x_1\leftarrow\mu_1} [x_1\in E_1] \leq \exp(\epsilon)
\Pr_{x_2\leftarrow\mu_2} [x_2\in E_2] +\delta .$$
\end{prop}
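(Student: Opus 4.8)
The plan is to unfold \Cref{def:approx-lift} and reduce the claimed inequality to the defining bound $\Delta_\epsilon(\mu_L,\mu_R)\le\delta$ applied to a carefully chosen event. From the $(\epsilon,\delta)$-lifting $\mu_1 \alift{\Psi}{(\epsilon,\delta)} \mu_2$ we obtain witnesses $\mu_L,\mu_R\in\Dist(B_1\times B_2)$ with $\pi_1(\mu_L)=\mu_1$, $\pi_2(\mu_R)=\mu_2$, $\supp(\mu_L)\subseteq\Psi$, and $\Delta_\epsilon(\mu_L,\mu_R)\le\delta$; notably, we will not even need $\supp(\mu_R)\subseteq\Psi$. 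Introduce the two ``rectangle'' events $F = E_1\times B_2$ and $G = B_1\times E_2$ in $B_1\times B_2$. By the two marginal conditions, $\Pr_{x\leftarrow\mu_L}[x\in F] = \Pr_{x_1\leftarrow\mu_1}[x_1\in E_1]$ and $\Pr_{x\leftarrow\mu_R}[x\in G] = \Pr_{x_2\leftarrow\mu_2}[x_2\in E_2]$.

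The key step is to use the support condition on $\mu_L$ to show that $F$ entails $G$, $\mu_L$-almost everywhere: if $(x_1,x_2)\in\supp(\mu_L)$ and $x_1\in E_1$, then $(x_1,x_2)\in\Psi$ forces $x_2\in E_2$, i.e.\ $(x_1,x_2)\in G$. Hence $\Pr_{x\leftarrow\mu_L}[x\in F]\le\Pr_{x\leftarrow\mu_L}[x\in G]$. Now apply the hypothesis $\Delta_\epsilon(\mu_L,\mu_R)\le\delta$ to the event $G$, obtaining $\Pr_{x\leftarrow\mu_L}[x\in G]\le\exp(\epsilon)\Pr_{x\leftarrow\mu_R}[x\in G]+\delta$. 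Chaining the equality, the almost-sure inclusion, and the DP-divergence bound gives
\[ \Pr_{x_1\leftarrow\mu_1}[x_1\in E_1] \;\le\; \exp(\epsilon)\,\Pr_{x_2\leftarrow\mu_2}[x_2\in E_2] + \delta, \]
which is the claim.

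I do not expect a genuine obstacle here; the proof is essentially bookkeeping once the events $F$ and $G$ are chosen. The one point that demands a little care is the asymmetry built into \Cref{def:approx-lift}: the first-marginal constraint holds for $\mu_L$ while the second-marginal constraint holds for $\mu_R$, so the argument must route the event $F$ through $\mu_L$, convert it to $G$ using $\supp(\mu_L)\subseteq\Psi$, and only then transfer to $\mu_R$ via the $\epsilon$-DP divergence. As a consistency check, specializing to $(\epsilon,\delta)=(0,0)$ recovers \Cref{prop:imp} via \Cref{prop:alift:lift}.
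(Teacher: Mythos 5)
Your proof is correct. The paper itself gives no proof of this proposition---it is cited from \citet{BartheO13} without an argument---so there is nothing to compare against directly, but the derivation you give is precisely the standard one: route $E_1$ to the rectangle $F=E_1\times B_2$ under $\mu_L$ via the first-marginal condition, use $\supp(\mu_L)\subseteq\Psi$ to dominate $\ind{F}$ by $\ind{G}$ where $G=B_1\times E_2$, apply the $\epsilon$-DP divergence bound to $G$ to transfer from $\mu_L$ to $\mu_R$, and then pull back to $E_2$ under $\mu_2$ via the second-marginal condition. Your observation that $\supp(\mu_R)\subseteq\Psi$ is never used is accurate (the definition imposes it for symmetry, so that the lifting composes well, but this particular one-directional consequence only needs the left support condition), and the $(0,0)$ consistency check against \Cref{prop:imp} and \Cref{prop:alift:lift} is a sound sanity test.
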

We can use this proposition to generalize~\Cref{prop:pw:eq}, which provides a
way to prove that two distributions $\mu_1$ and $\mu_2$ are equal---equivalently,
$\mu_1 \lift{=} \mu_2$. Generalizing this
lifting from exact to approximate yields the following pointwise
characterization of differential privacy, a staple technique of pen-and-paper proofs.
\begin{prop}[Differential privacy from pointwise lifting]\label{prop:pw:dp}
A probabilistic computation $M:A\rightarrow\Dist(B)$ is $(\epsilon,
\delta)$-differentially private w.r.t.\, an adjacency relation $\Phi$
iff there exists $(\delta_b)_{b \in B}\in\mathbb{R}^{\geq 0}$ such
that $\sum_{b\in B} \delta_b \leq\delta$, and
$M(a)~\alift{\Psi_b}{(\epsilon,\delta_b)} ~M(a')$ for every $b\in B$ and
every two adjacent inputs $a$ and $a'$, where
$$\Psi_b =\{ (x_1,x_2)\in B\times B \mid x_1= b\Rightarrow x_2 = b\} .$$
\end{prop}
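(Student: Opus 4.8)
The plan is to prove both directions of the equivalence, leveraging Proposition~\ref{prop:imp:epsdel} for the interesting direction and a direct averaging argument over the witnesses for the converse.

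For the ``only if'' direction, suppose $M$ is $(\epsilon,\delta)$-differentially private w.r.t.\ $\Phi$. Fix adjacent inputs $a~\Phi~a'$. By Proposition~\ref{prop:dp:div} we have $\Delta_\epsilon(M(a),M(a'))\leq\delta$, i.e.\ $\sum_{b\in B}\big(M(a)(b)-\exp(\epsilon)M(a')(b)\big)^+ \leq\delta$ after recognizing that the supremum defining $\Delta_\epsilon$ is attained at $E=\{b : M(a)(b)>\exp(\epsilon)M(a')(b)\}$, so $\Delta_\epsilon(M(a),M(a'))=\sum_{b\in B}\big(M(a)(b)-\exp(\epsilon)M(a')(b)\big)^+$. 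The natural candidate is $\delta_b := \sup_{a~\Phi~a'}\big(M(a)(b)-\exp(\epsilon)M(a')(b)\big)^+$; these satisfy $\sum_{b\in B}\delta_b\leq\delta$ provided one is careful about swapping the sum and the supremum (handled by taking suprema over finite partial sums, since all terms are nonnegative). For each $b$, I then need to exhibit an $(\epsilon,\delta_b)$-lifting of $\Psi_b$ between $M(a)$ and $M(a')$. Here I construct explicit witnesses $\mu_L,\mu_R$ supported on $\Psi_b$: put all the ``off-diagonal from the left'' mass of $M(a)$ on pairs $(x_1,b)$ with $x_1\neq b$, and symmetrically for $\mu_R$; more precisely, $\mu_L$ is concentrated so that $\pi_1(\mu_L)=M(a)$ with the $b$-fiber of $\mu_L$ a point mass at $(b,b)$ and all other mass sent to $(x_1,b)$, and dually $\mu_R$ with $\pi_2(\mu_R)=M(a')$. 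Both supports lie in $\Psi_b$ by construction. Then $\Delta_\epsilon(\mu_L,\mu_R)$ reduces, after projecting, to $\big(M(a)(b)-\exp(\epsilon)M(a')(b)\big)^+ \leq \delta_b$: the only pair on which $\mu_L$ can exceed $\exp(\epsilon)\mu_R$ is $(b,b)$, since everything else in $\mu_R$'s support that matters has large enough mass. This is the step I expect to require the most care --- getting the witnesses' supports exactly inside $\Psi_b$ while pinning the DP-divergence to the single diagonal coordinate.

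For the ``if'' direction, assume such a family $(\delta_b)_{b\in B}$ and liftings exist. Fix adjacent $a~\Phi~a'$ and an arbitrary event $E\subseteq B$. I want $\Pr_{y\leftarrow M(a)}[y\in E]\leq\exp(\epsilon)\Pr_{y\leftarrow M(a')}[y\in E]+\delta$, i.e.\ $\Delta_\epsilon(M(a),M(a'))\leq\delta$, which by Proposition~\ref{prop:dp:div} gives differential privacy. Write $\Pr_{y\leftarrow M(a)}[y\in E]=\sum_{b\in E}M(a)(b)$. For each individual $b$, apply Proposition~\ref{prop:imp:epsdel} with $E_1=\{b\}$ and $E_2=\{b\}$: since $\Psi_b=\{(x_1,x_2)\mid x_1=b\Rightarrow x_2=b\}$ is exactly the relation of that proposition for these singleton events, and $M(a)~\alift{\Psi_b}{(\epsilon,\delta_b)}~M(a')$, we get $M(a)(b)\leq\exp(\epsilon)M(a')(b)+\delta_b$. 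Summing over $b\in E$ yields $\sum_{b\in E}M(a)(b)\leq\exp(\epsilon)\sum_{b\in E}M(a')(b)+\sum_{b\in E}\delta_b\leq\exp(\epsilon)\Pr_{y\leftarrow M(a')}[y\in E]+\delta$, using $\sum_{b\in E}\delta_b\leq\sum_{b\in B}\delta_b\leq\delta$. Since $E$ was arbitrary, this bounds $\Delta_\epsilon(M(a),M(a'))$ by $\delta$, completing the direction.

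The only genuinely delicate point is the witness construction in the forward direction; the backward direction is essentially a termwise application of Proposition~\ref{prop:imp:epsdel} followed by summation, mirroring the proof of Proposition~\ref{prop:pw:eq}. An alternative, cleaner route for the forward direction avoids building witnesses by hand: one can instead observe that $M(a)~\alift{\Psi_b}{(\epsilon,\delta_b)}~M(a')$ is implied by the pair of scalar inequalities $M(a)(b)\leq\exp(\epsilon)M(a')(b)+\delta_b$ together with an unconstrained budget on the complement of $\{b\}$ (since $\Psi_b$ imposes a constraint only on the fiber over $x_1=b$), and then pick $\delta_b$ as above. I would present whichever of these two framings yields the shortest argument, but either way the crux is confirming that the $\Psi_b$-lifting genuinely follows from the single pointwise inequality.
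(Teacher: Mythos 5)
Your ``if'' direction (liftings $\implies$ privacy) is exactly the argument in the paper: apply Proposition~\ref{prop:imp:epsdel} to each $\Psi_b$ to get $M(a)(b)\leq\exp(\epsilon)M(a')(b)+\delta_b$, then observe that a family of pointwise bounds with $\sum_b\delta_b\leq\delta$ is equivalent to $\Delta_\epsilon(M(a),M(a'))\leq\delta$, which with Proposition~\ref{prop:dp:div} gives differential privacy. The paper states this equivalence as a preliminary ``iff'' and essentially stops there; your decision to spell out the converse explicitly is more careful than the source, but the converse is where your argument runs into trouble.

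Two points in the forward (``only if'') direction are genuinely broken. First, the choice $\delta_b := \sup_{a\Phi a'}\bigl(M(a)(b)-\exp(\epsilon)M(a')(b)\bigr)^+$ does not in general satisfy $\sum_b\delta_b\leq\delta$: the sum over $b$ and the supremum over adjacent pairs do not commute in the direction you need. (Concretely, with $B=\{0,1\}$, $\epsilon=0$, one adjacent pair moving $\delta$ of mass from $0$ to $1$ and another moving $\delta$ from $1$ to $0$, you get $\delta_0=\delta_1=\delta$ and $\sum_b\delta_b=2\delta$.) The proposition must be read with $(\delta_b)_b$ allowed to depend on the adjacent pair $(a,a')$---that is how the paper's preliminary observation is phrased and how the result is used downstream---so the fix is simply to fix $a,a'$ first and take $\delta_b=(M(a)(b)-\exp(\epsilon)M(a')(b))^+$ with no supremum. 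Second and more seriously, the explicit witnesses are wrong. You send all of $\mu_L$'s mass into the fiber $\{(x_1,b)\}$ and ``dually'' send all of $\mu_R$'s mass into $\{(b,x_2)\}$; but $(b,x_2)$ with $x_2\neq b$ is exactly the set $\Psi_b$ excludes, so $\supp(\mu_R)\not\subseteq\Psi_b$. If you repair $\mu_R$ by dumping its off-$b$ mass on some $(x_1,x_2)$ with $x_1\neq b$, the supports of $\mu_L$ and $\mu_R$ are essentially disjoint away from $(b,b)$, and $\Delta_\epsilon(\mu_L,\mu_R)$ then includes all of $\mu_L$'s off-$b$ mass $\sum_{x_1\neq b}\mu_1(x_1)$---nothing pins the divergence to the single entry $(b,b)$. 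A working construction must make the two witnesses overlap on the unconstrained region $\{x_1\neq b\}\times B$: for instance, fix the $(b,b)$ entries to be $\mu_1(b)$ and some $q\leq\mu_2(b)$, and on $x_1\neq b$ set $\mu_L$ and $\mu_R$ \emph{proportional to the same product} $\mu_1(x_1)\cdot\tilde\mu_2(x_2)$ (where $\tilde\mu_2=\mu_2-q\cdot\mathbbm{1}_{\{b\}}$), normalized so each has the required marginal; their ratio off $(b,b)$ is then the constant $(\mathrm{wt}(\mu_1)-\mu_1(b))/(\mathrm{wt}(\mu_2)-q)$, and one checks that the pointwise hypothesis leaves room to choose $q$ so that this constant is at most $\exp(\epsilon)$ while the slack at $(b,b)$ is at most $\delta_b$. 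Your ``alternative, cleaner route'' gestures at the right intuition---$\Psi_b$ constrains only the $x_1=b$ fiber---but as written it is an assertion, not a proof, and the honest proof needs precisely this kind of overlapping witness construction.
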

\iftpdp\else
\begin{proof}
First note that $\Delta_{\epsilon}(\mu_1,\mu_2)\leq\delta$ iff there
exists $(\delta_b)_{b\in B}\in\mathbb{R}^{\geq 0}$ s.t. $\mu_1 (b)\leq
\exp(\epsilon) \mu_2(b) +\delta_b$ for every $b\in B$, and $\sum_{b\in
  B}\delta_b\leq\delta$.
So, it is sufficient to show that for every $b\in B$ and every
two adjacent inputs $a$ and $a'$, we have
$$\Pr_{x\leftarrow M(a)} [x=b] \leq \exp(\epsilon)
\Pr_{x\leftarrow M(a')} [x=b] +\delta_b$$
with $\sum_{b\in B} \delta_b \leq\delta$. This follows from
\Cref{prop:imp:epsdel}.
\end{proof}
\fi

\subsection{Probabilistic liftings for the Laplace mechanism}
So far, we have seen general properties about approximate liftings and
differential privacy. Now, we turn to more specific liftings relevant to
typical distributions in differential privacy.
In terms of approximate liftings, we can state the privacy of the Laplace
mechanism \iftpdp\else(\Cref{thm:lap:priv})\fi{} in the following form.
\begin{prop}\label{prop:lap}
Let $v_1,v_2\in\mathbb{Z}$ and $k\in\mathbb{N}$ s.t. $|v_1 - v_2 |
\leq k$. Then $\Lap_\epsilon(v_1) ~\alift{=}{(k\cdot \epsilon,0)}~
\Lap_\epsilon (v_2)$.
\end{prop}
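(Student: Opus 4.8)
The plan is to exhibit an explicit witness distribution for the $(k\epsilon,0)$-lifting of equality. By \Cref{prop:alift:lift} we do not need the full machinery of two witnesses $\mu_L,\mu_R$: instead I will construct a single sub-distribution $\mu \in \Dist(\ZZ \times \ZZ)$ supported on the diagonal $\{(z,z) \mid z \in \ZZ\}$, whose first marginal is $\Lap_\epsilon(v_1)$, and whose second marginal is \emph{close} to $\Lap_\epsilon(v_2)$ in the sense that $\Delta_{k\epsilon}(\pi_2(\mu), \Lap_\epsilon(v_2)) \le 0$. Concretely, the natural coupling is the \emph{shift coupling}: sample $\nu$ from $\mathrm{Laplace}(1/\epsilon)$ and return the pair $(v_1 + \nu,\; v_1 + \nu)$; equivalently, $\mu(z,z) = \Lap_\epsilon(v_1)(z)$ and $\mu$ is zero off the diagonal. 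Then $\pi_1(\mu) = \Lap_\epsilon(v_1)$ exactly, $\supp(\mu) \subseteq \{{=}\}$, and $\pi_2(\mu) = \Lap_\epsilon(v_1)$ as well.

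The crux is then the pointwise probability comparison between $\Lap_\epsilon(v_1)$ and $\Lap_\epsilon(v_2)$. For any $z \in \ZZ$,
\[
\frac{\Lap_\epsilon(v_1)(z)}{\Lap_\epsilon(v_2)(z)}
= \frac{\exp(-\epsilon\,|z - v_1|)}{\exp(-\epsilon\,|z - v_2|)}
= \exp\bigl(\epsilon(|z - v_2| - |z - v_1|)\bigr)
\le \exp\bigl(\epsilon\,|v_1 - v_2|\bigr)
\le \exp(k\epsilon),
\]
using the reverse triangle inequality $|z - v_2| - |z - v_1| \le |v_1 - v_2|$ and the hypothesis $|v_1 - v_2| \le k$; note the normalizing constants of the two Laplace distributions are identical and cancel. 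Hence $\Lap_\epsilon(v_1)(z) \le \exp(k\epsilon)\,\Lap_\epsilon(v_2)(z)$ for every $z$, which gives $\Pr_{x \leftarrow \Lap_\epsilon(v_1)}[x \in E] \le \exp(k\epsilon)\Pr_{x \leftarrow \Lap_\epsilon(v_2)}[x \in E]$ for every $E \subseteq \ZZ$ by summing, i.e. $\Delta_{k\epsilon}(\Lap_\epsilon(v_1), \Lap_\epsilon(v_2)) \le 0$.

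Putting the pieces together: take $\mu_L = \mu_R = \mu$, the diagonal witness above. Condition (1) of \Cref{def:approx-lift} holds since $\pi_1(\mu_L) = \Lap_\epsilon(v_1)$ and $\pi_2(\mu_R) = \pi_2(\mu) = \Lap_\epsilon(v_1)$; condition (2) holds since $\supp(\mu) \subseteq \{(z,z)\} \subseteq \mathord{=}$; and condition (3) requires $\Delta_{k\epsilon}(\mu_L,\mu_R) \le 0$, which — since $\mu_L = \mu_R$ — is immediate. This establishes $\Lap_\epsilon(v_1) \alift{=}{(k\epsilon,0)} \Lap_\epsilon(v_2)$. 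I expect no genuine obstacle here; the only point requiring care is the bookkeeping that an $(\epsilon,\delta)$-lifting of equality is equivalent to a DP-divergence bound, so that the diagonal coupling — which makes the two marginals literally identical on one side while paying the privacy cost in the divergence between the marginals — is a legitimate witness. (Alternatively, one could observe directly that $\Lap_\epsilon(v_1) \alift{=}{(k\epsilon,0)} \Lap_\epsilon(v_2)$ unfolds to exactly $\Delta_{k\epsilon}(\Lap_\epsilon(v_1),\Lap_\epsilon(v_2)) \le 0$, making the displayed pointwise inequality the entire proof.)
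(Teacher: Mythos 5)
Your pointwise ratio bound between the two Laplace densities is correct, and your closing ``alternative'' remark is in fact the paper's argument: the paper states \Cref{prop:lap} as a direct reformulation of \Cref{thm:lap:priv}, via the observation (made explicitly in the text just after \Cref{def:approx-lift}) that $\mu_1~\alift{=}{(\epsilon,\delta)}~\mu_2$ holds iff $\Delta_\epsilon(\mu_1,\mu_2)\leq\delta$. However, the main witness construction in your proof is flawed. \Cref{def:approx-lift} requires $\pi_1(\mu_L)=\mu_1$ \emph{and} $\pi_2(\mu_R)=\mu_2$. With $\mu_L=\mu_R=\mu$ and $\mu(z,z)=\Lap_\epsilon(v_1)(z)$, you get $\pi_2(\mu_R)=\Lap_\epsilon(v_1)$, not $\Lap_\epsilon(v_2)$, so condition (1) fails whenever $v_1\neq v_2$. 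You try to repair this by asking only that $\pi_2(\mu)$ be \emph{close} to $\Lap_\epsilon(v_2)$ in DP-divergence, but the definition does not allow that relaxation: the marginal conditions are exact, and the privacy cost is charged in condition (3) as a divergence between $\mu_L$ and $\mu_R$. The citation of \Cref{prop:alift:lift} to justify using a single witness is also a misreading; that proposition concerns only the $(0,0)$ case over full distributions.

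The explicit two-witness construction you want is: let $\mu_L$ be diagonal with $\mu_L(z,z)=\Lap_\epsilon(v_1)(z)$ and let $\mu_R$ be diagonal with $\mu_R(z,z)=\Lap_\epsilon(v_2)(z)$. Then $\pi_1(\mu_L)=\Lap_\epsilon(v_1)$ and $\pi_2(\mu_R)=\Lap_\epsilon(v_2)$ exactly, both supports are contained in the diagonal, and $\Delta_{k\epsilon}(\mu_L,\mu_R)=\Delta_{k\epsilon}(\Lap_\epsilon(v_1),\Lap_\epsilon(v_2))\leq 0$ by precisely your pointwise inequality (the two witnesses put mass only on the diagonal, so every event reduces to a subset of $\mathbb{Z}$). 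This is the intended mechanism of approximate liftings of equality: the cost is not paid by distorting a marginal but by allowing divergence between the left and right witnesses.
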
 

\Cref{prop:lap} is sufficiently general to capture most examples
from the literature, but not for the examples of this paper; informally,
applying \Cref{prop:lap} only allows us to prove privacy using the standard
composition theorems. To see how we might generalize the principle, note that
privacy from pointwise liftings (\Cref{prop:pw:dp}) involves liftings
of an \emph{asymmetric} relation, rather than equality. This suggests that it
could be profitable to consider asymmetric liftings. Indeed, we propose the
following generalization of \Cref{prop:lap}.
\begin{prop}\label{prop:lap:gen}
Let $v_1,v_2,k\in\mathbb{Z}$.  Then
\[
  \Lap_\epsilon (v_1)
  ~\alift{\Psi}{(|k + v_1 - v_2|\cdot \epsilon,0)}~
  \Lap_\epsilon (v_2) ,
\]
where
\[
  \Psi = \{ (x_1,x_2)\in\mathbb{Z}\times\mathbb{Z} \mid x_1 + k = x_2 \} .
\]
\end{prop}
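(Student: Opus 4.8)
The plan is to construct the witness distributions explicitly by "shifting" one copy of the Laplace distribution. The relation $\Psi$ only relates $x_1$ to $x_1 + k$, so the support constraints force both witnesses to be supported on the graph of the map $z \mapsto z + k$. The natural candidate is to take $\mu_L = \mu_R = \mu$, where $\mu \in \Dist(\ZZ \times \ZZ)$ is the pushforward of $\Lap_\epsilon(v_1)$ along $z \mapsto (z, z+k)$; concretely $\mu(z, z+k) = \Lap_\epsilon(v_1)(z)$ and $\mu$ is zero elsewhere. By construction $\supp(\mu) \subseteq \Psi$, and the first marginal $\pi_1(\mu)$ is exactly $\Lap_\epsilon(v_1)$, so condition (1)'s left half and condition (2) hold immediately. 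Since $\mu_L = \mu_R$, the DP-divergence condition (3) would be trivial ($\Delta_\epsilon(\mu,\mu) = 0 \leq 0$) — but then I must check that the second marginal condition $\pi_2(\mu_R) = \Lap_\epsilon(v_2)$ holds, and this is where the real content lives.

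So the key computation is: $\pi_2(\mu)(w) = \Lap_\epsilon(v_1)(w - k)$, and I need to compare this with $\Lap_\epsilon(v_2)(w)$. Using the explicit density, $\Lap_\epsilon(v_1)(w-k) \propto \exp(-\epsilon \cdot |w - k - v_1|)$ and $\Lap_\epsilon(v_2)(w) \propto \exp(-\epsilon \cdot |w - v_2|)$, with the same normalizing constant. The first is obtained from the second by the substitution $w \mapsto w$ but with center shifted from $v_2$ to $v_1 + k$. By the triangle inequality, for every $w$,
\[
  \bigl| |w - (v_1+k)| - |w - v_2| \bigr| \leq |(v_1 + k) - v_2| = |k + v_1 - v_2|,
\]
so the ratio of the two densities is bounded: $\pi_2(\mu)(w) \leq \exp(|k+v_1-v_2|\cdot\epsilon)\cdot \Lap_\epsilon(v_2)(w)$ and symmetrically. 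Thus $\pi_2(\mu)$ and $\Lap_\epsilon(v_2)$ are not equal in general — so the naive choice $\mu_L = \mu_R$ does \emph{not} satisfy condition (1). This means I actually need two different witnesses.

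The fix is the standard trick for approximate liftings of the Laplace mechanism (the one underlying \Cref{prop:lap}): I take $\mu_L$ to be the pushforward of $\Lap_\epsilon(v_1)$ along $z \mapsto (z, z+k)$ as above, so $\pi_1(\mu_L) = \Lap_\epsilon(v_1)$ and $\supp(\mu_L) \subseteq \Psi$; and I take $\mu_R$ to be the pushforward of $\Lap_\epsilon(v_2)$ along $z \mapsto (z - k, z)$, so $\pi_2(\mu_R) = \Lap_\epsilon(v_2)$ and $\supp(\mu_R) \subseteq \Psi$ as well. Conditions (1) and (2) now hold by construction. It remains to bound $\Delta_{|k+v_1-v_2|\cdot\epsilon}(\mu_L, \mu_R)$ by $0$: since both witnesses are supported on the graph $\{(z, z+k)\}$, it suffices to compare them pointwise there, i.e. to show $\mu_L(z, z+k) \leq \exp(|k+v_1-v_2|\cdot\epsilon)\,\mu_R(z, z+k)$ for all $z$. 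But $\mu_L(z,z+k) = \Lap_\epsilon(v_1)(z)$ and $\mu_R(z,z+k) = \Lap_\epsilon(v_2)(z+k)$, and the density bound from the previous paragraph (triangle inequality with centers $v_1$ and $v_2 - k$, whose distance is $|v_1 - v_2 + k|$) gives exactly this. Therefore $\Delta_{|k+v_1-v_2|\cdot\epsilon}(\mu_L,\mu_R) = 0$, condition (3) holds, and the approximate lifting follows.

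The main obstacle is recognizing that the symmetric choice $\mu_L = \mu_R$ fails — the relation $\Psi$ is deterministic and functional, so the two marginals of a single witness cannot both be the prescribed shifted Laplace distributions unless those distributions happen to coincide (which they don't when $v_1 + k \neq v_2$). Once one commits to two separate witnesses aligned with the two marginals, the rest is a routine triangle-inequality estimate on Laplace densities. A minor point to get right is that $k$ is now allowed to be any integer (not just a nonnegative one as in the sensitivity setting), but since everything is phrased via absolute values this causes no difficulty; one should also note the proof works verbatim with sub-distributions, since the normalization constant cancels in the density ratio.
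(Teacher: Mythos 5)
Your proof is correct, but it takes a different route than the paper's. The paper proves \Cref{prop:lap:gen} by reduction: it observes that the $\Psi$-lifting (where $\Psi$ is the graph of $z \mapsto z+k$) between $\Lap_\epsilon(v_1)$ and $\Lap_\epsilon(v_2)$ is equivalent, after shifting the first coordinate by $k$, to the \emph{equality} lifting between $\Lap_\epsilon(v_1+k)$ and $\Lap_\epsilon(v_2)$, which is exactly \Cref{prop:lap} applied with distance bound $|(v_1+k)-v_2| = |k+v_1-v_2|$. You instead unfold the definition of approximate lifting, construct the two witnesses $\mu_L$ and $\mu_R$ explicitly as pushforwards along $z\mapsto(z,z+k)$ and $z\mapsto(z-k,z)$ respectively, and verify the DP-divergence bound by a pointwise comparison of Laplace densities via the reverse triangle inequality. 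In effect, you inline the proof of \Cref{prop:lap} rather than citing it. The paper's approach is more modular (and shorter, given that \Cref{prop:lap} is already available); yours is self-contained and has the pedagogical advantage of making the witness structure visible. Your preliminary observation is also a genuine insight worth keeping: because $\Psi$ is the graph of a bijection, a single witness $\mu_L=\mu_R$ supported on $\Psi$ has second marginal forced to be a translate of its first, so the symmetric choice cannot produce two genuinely different Laplace centers---this is precisely why the approximate-lifting definition needs two witnesses, and the point is easy to miss if one only ever applies \Cref{prop:lap} as a black box.
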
 
\iftpdp\else
\begin{proof}
  It suffices to prove $\mu_1 ~\alift{\Psi}{(|k + v_1 - v_2|\cdot
    \epsilon,0)}~ \mu_2$, where $\mu_1$ is the distribution of $v_1 + \eta_1 +
  k$ and $\mu_2$ is the distribution of $v_2 + \eta_2$, with $\eta_1, \eta_2$
  draws from the discrete Laplace distribution $\text{Laplace}(1/\epsilon)$. By
  the definition of the Laplace mechanism, $\mu_1 = \Lap_\epsilon(v_1 + k)$ and
  $\mu_2 = \Lap_\epsilon(v_2)$. Now, we can conclude by \Cref{prop:lap}.
\end{proof}
\fi
\Cref{prop:lap:gen} has several useful consequences. For instance, when $|v_1 -
v_2| \leq k$ we have $\Lap_\epsilon(v_1) ~\alift{\Psi}{(2k\cdot \epsilon,0)}~
\Lap_\epsilon (v_2)$ with
\begin{equation} \label{eq:lapapart}
\Psi = \{ (x_1,x_2)\in\mathbb{Z}\times\mathbb{Z} \mid
x_1+k=x_2 \} ,
\end{equation}
following from \Cref{prop:lap:gen} and the triangle inequality
\[
  |v_1 - v_2| \leq k \Rightarrow |k + (v_1 - v_2)| \leq k + k = 2k .
\]
Informally, this instance of \Cref{prop:lap:gen} shows that by ``paying''
privacy cost $\epsilon$, we can ensure that the samples are a certain distance
apart. This stands in contrast to \Cref{prop:lap}, which ensures that the
samples are equal.

Another useful consequence is that adding identical noise to both
$v_1$ and $v_2$ incurs no privacy cost, and we can assume the
difference between the samples is the difference between $v_1$ and
$v_2$.
\begin{prop}\label{prop:lap:nogrow}
Let $v_1,v_2\in\mathbb{Z}$. Then $\Lap_\epsilon(v_1) ~\alift{\Psi}{(0,0)}~
\Lap_\epsilon (v_2)$, where
\[
  \Psi = \{ (x_1,x_2)\in\mathbb{Z}\times\mathbb{Z} \mid x_1-x_2 = v_1 - v_2 \} .
\]
\end{prop}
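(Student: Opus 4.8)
The plan is to exhibit an \emph{exact} coupling---in fact the ``diagonal'' one that reuses a single noise draw on both sides---and then invoke \Cref{prop:alift:lift} to upgrade it to the required $(0,0)$-lifting. The underlying intuition is the standard ``share the randomness'' argument: if we run $\Lap_\epsilon$ on $v_1$ and on $v_2$ using literally the same sample $\nu$, the two outputs differ by exactly $v_1 - v_2$, with no distributional discrepancy whatsoever.

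Concretely, I would first unfold the definition of the Laplace mechanism: $\Lap_\epsilon(v_i)$ is the distribution of $v_i + \nu_i$ with $\nu_i$ drawn from the discrete Laplace distribution $\mathrm{Laplace}(1/\epsilon)$. Take as witness the sub-distribution $\mu \in \Dist(\mathbb{Z} \times \mathbb{Z})$ obtained by pushing $\mathrm{Laplace}(1/\epsilon)$ forward along the injection $\nu \mapsto (v_1 + \nu,\, v_2 + \nu)$; equivalently, $\mu(x_1,x_2) = \Pr_{\nu \leftarrow \mathrm{Laplace}(1/\epsilon)}[\nu = x_1 - v_1]$ when $x_1 - v_1 = x_2 - v_2$, and $\mu(x_1,x_2) = 0$ otherwise. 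Then I would check the defining conditions of a coupling: since $\nu \mapsto v_1 + \nu$ and $\nu \mapsto v_2 + \nu$ are bijections of $\mathbb{Z}$, the two marginals of $\mu$ are exactly $\Lap_\epsilon(v_1)$ and $\Lap_\epsilon(v_2)$, and $\mu$ has the same total mass ($= 1$) as $\mathrm{Laplace}(1/\epsilon)$. Moreover every $(x_1,x_2) \in \supp(\mu)$ has the form $(v_1 + \nu,\, v_2 + \nu)$, so $x_1 - x_2 = v_1 - v_2$; hence $\supp(\mu) \subseteq \Psi$. This gives the exact lifting $\Lap_\epsilon(v_1)~\lift{\Psi}~\Lap_\epsilon(v_2)$, and \Cref{prop:alift:lift} then yields $\Lap_\epsilon(v_1)~\alift{\Psi}{(0,0)}~\Lap_\epsilon(v_2)$.

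There is essentially no hard step: the content of the proposition is precisely the diagonal coupling, and the only point to be mildly careful about is that a single witness suffices, so the $\Delta_\epsilon$-closeness requirement in the definition of approximate lifting is met trivially by setting $\mu_L = \mu_R = \mu$---which is exactly what \Cref{prop:alift:lift} packages up. If anything, the ``obstacle'' is purely bookkeeping: confirming that the pushforward along $\nu \mapsto (v_1+\nu, v_2+\nu)$ really has the claimed marginals, which is immediate from the change of variables being a bijection on $\mathbb{Z}$.
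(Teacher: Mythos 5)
Your argument is correct, but it follows a different route from the paper's. The paper derives this result as a one-line corollary of \Cref{prop:lap:gen}: instantiating $k = v_2 - v_1$ there makes the privacy budget $|k + v_1 - v_2|\cdot\epsilon = 0$ and turns the relation $x_1 + k = x_2$ into $x_1 - x_2 = v_1 - v_2$, which is precisely $\Psi$. You instead build the coupling from scratch, exhibiting the ``diagonal'' witness that pushes the noise distribution forward along $\nu\mapsto(v_1+\nu,\,v_2+\nu)$, checking its marginals and support, and then invoking the easy direction of \Cref{prop:alift:lift} to promote the exact $\Psi$-lifting to a $(0,0)$-approximate one. Both approaches are sound; if you unwind the paper's chain of reductions (\Cref{prop:lap:nogrow} from \Cref{prop:lap:gen} from \Cref{prop:lap}) down to the degenerate case $v_1 + k = v_2$, you land exactly on your diagonal coupling. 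Your version is more self-contained and makes the ``share the same random draw'' intuition explicit, at the price of redoing a marginal check that the general lemma already packages; the paper's version is terser and emphasizes that the proposition is merely a specialization of \Cref{prop:lap:gen}. One small point you handled correctly and that does matter: \Cref{prop:alift:lift} is stated for full distributions, not arbitrary sub-distributions, and you rightly observed that $\Lap_\epsilon(v_i)$ has total mass one, so the upgrade applies.
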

\iftpdp\else
\begin{proof}
  Immediate by \Cref{prop:lap:gen} with $k = v_2 - v_1$.
\end{proof}
\fi

\iftpdp\else
\subsection{Probabilistic liftings for one-sided Laplace mechanism}

While the Laplace mechanism is already sufficient to implement a wide variety of
private algorithms, a few algorithms use other
distributions. In particular, the Exponential mechanism can be
implemented in terms of the \emph{one-sided Laplace} mechanism. This algorithm
is the same as the Laplace mechanism except noise is drawn from the \emph{one-sided
Laplace distribution} (also called the \emph{exponential distribution}), which
outputs non-negative integers.

\begin{definition}[One-sided Laplace mechanism]
Let $\epsilon>0$. The \emph{discrete one-sided Laplace mechanism}
$\OneLap_\epsilon:\mathbb{Z}\rightarrow \SDist(\mathbb{Z})$ is defined by
$$ 
\OneLap_{\epsilon}(t) = t + \nu,
$$
where $\nu$ non-negative integer drawn from the Laplace distribution
$\mathrm{Laplace}^+(1/\epsilon)$, i.e.\, with probabilities proportional to
\[
  \Pr[ \nu ] \propto \exp{(-\epsilon \cdot \nu)}.
\]
\end{definition}

While this mechanism is not $\epsilon$-differentially private for any
$\epsilon$, we can
still consider probabilistic liftings for the samples. We have the following two
results, analogous to \Cref{prop:lap:gen,prop:lap:nogrow}.

\begin{prop}\label{prop:olap:gen}
Let $v_1,v_2,k\in\mathbb{Z}$ such that $k \geq v_2 - v_1$. Then
\[
  \OneLap_\epsilon (v_1)
  ~\alift{\Psi}{((k + v_1 - v_2) \cdot \epsilon,0)}~
  \OneLap_\epsilon (v_2) ,
\]
where
\[
  \Psi = \{ (x_1,x_2) \in \mathbb{Z} \times \mathbb{Z} \mid x_1 + k = x_2 \} .
\]
\end{prop}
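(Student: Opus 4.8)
The plan is to exhibit the two witness sub-distributions of \Cref{def:approx-lift} directly. Unlike \Cref{prop:lap:gen}, which could be reduced to the symmetric statement \Cref{prop:lap} by a shift, the one-sided mechanism admits no symmetric privacy bound to lean on, so a hands-on construction is unavoidable. Write $c = k + v_1 - v_2$ and note that the hypothesis $k \geq v_2 - v_1$ is exactly the statement $c \geq 0$. Recall that $\OneLap_\epsilon(v)$ puts weight $(1-e^{-\epsilon})\, e^{-\epsilon(x-v)}$ on each integer $x \geq v$ and weight $0$ on $x < v$; the target privacy parameter is $(\epsilon',\delta) = (c\epsilon, 0)$.

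I would then take $\mu_L \in \Dist(\mathbb{Z}\times\mathbb{Z})$ to be the image of $\OneLap_\epsilon(v_1)$ under the map $x \mapsto (x, x+k)$, and $\mu_R \in \Dist(\mathbb{Z}\times\mathbb{Z})$ the image of $\OneLap_\epsilon(v_2)$ under $y \mapsto (y-k, y)$. Being pushforwards of distributions along injections, both are genuine distributions on the product. By construction $\supp(\mu_L)$ and $\supp(\mu_R)$ are contained in $\Psi$ (condition~2 of \Cref{def:approx-lift}), and $\pi_1(\mu_L) = \OneLap_\epsilon(v_1)$ while $\pi_2(\mu_R) = \OneLap_\epsilon(v_2)$ (condition~1). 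What remains is condition~3, namely $\Delta_{c\epsilon}(\mu_L, \mu_R) \leq 0$.

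Since $\delta = 0$, condition~3 is equivalent to the pointwise inequality $\mu_L(z) \leq e^{c\epsilon}\, \mu_R(z)$ for every $z \in \mathbb{Z}\times\mathbb{Z}$ (the same observation used to prove \Cref{prop:alift:lift}). Both witnesses are supported on $\Psi$, so it suffices to compare the weights placed on $(x_1, x_1+k)$ as $x_1$ ranges over $\mathbb{Z}$: $\mu_L$ puts weight $(1-e^{-\epsilon})e^{-\epsilon(x_1 - v_1)}$ there when $x_1 \geq v_1$ (and $0$ otherwise), while $\mu_R$ puts weight $(1-e^{-\epsilon})e^{-\epsilon(x_1 + k - v_2)}$ there when $x_1 \geq v_2 - k$ (and $0$ otherwise). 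On the common region $x_1 \geq v_1$ the ratio of these weights is exactly $e^{\epsilon(k + v_1 - v_2)} = e^{c\epsilon}$, so the desired bound holds there with equality; for $x_1 < v_1$ the left side is $0$ and the bound is trivial. The one thing to be careful about is that the trivial case hides no division by zero, i.e.\ that $\mu_R$ is strictly positive wherever $\mu_L$ is; this reduces to the inclusion $\{x_1 : x_1 \geq v_1\} \subseteq \{x_1 : x_1 \geq v_2 - k\}$, that is, to $v_2 - k \leq v_1$, which is precisely $c \geq 0$.

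The main (and essentially only) obstacle is this support containment: it is exactly where the hypothesis $k \geq v_2 - v_1$ is consumed. Without it, $\mu_L$ would be positive at some point where $\mu_R$ vanishes, so $\Delta_{\epsilon'}(\mu_L, \mu_R) > 0$ for every finite $\epsilon'$ and the $\delta = 0$ claim would fail; this is also the reason the equality-based relation of \Cref{prop:lap} has no counterpart here and the shifted relation $\Psi$ is forced. Everything else is routine geometric-series bookkeeping (checking that the two pushforwards are distributions and computing a single ratio), so once the containment is noted the proof is finished.
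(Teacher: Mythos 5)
Your proof is correct and follows essentially the same route as the paper: both construct the two witnesses as pushforwards of the one-sided Laplace distribution along $x \mapsto (x, x+k)$ and $y \mapsto (y-k, y)$, and both verify $\Delta_{c\epsilon}(\mu_L,\mu_R) \leq 0$ by a pointwise comparison of weights. The only cosmetic difference is that the paper first shifts to the special case $v_1 = v_2 = 0$ before writing down the witnesses, whereas you carry $v_1, v_2$ through the calculation directly; the place the hypothesis $k \geq v_2 - v_1$ is consumed (support containment) is the same in both.
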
 
\begin{proof}
  It suffices to consider the case where $v_1 = v_2 = 0$:
  $\OneLap_\epsilon (v)$ is the same distribution as sampling from
  $\OneLap_\epsilon(0)$ and adding $v$, so the desired conclusion follows from
  \[
    \OneLap_\epsilon (0)
    ~\alift{\Psi'}{((k + v_1 - v_2) \cdot \epsilon,0)}~
    \OneLap_\epsilon (0) ,
  \]
  where
  \begin{align*}
    \Psi' &= \{ (x_1,x_2) \in \mathbb{Z} \times \mathbb{Z} \mid (x_1 + v_1) + k
    = (x_2 + v_2) \} \\
    &= \{ (x_1,x_2) \in \mathbb{Z} \times \mathbb{Z} \mid x_1 + (k + v_1 - v_2)
    = x_2 \} ,
  \end{align*}
  which follows from the $v_1 = v_2 = 0$ case since $k + v_1 - v_2 \geq 0$ by
  assumption.

  So, we assume $v_1 = v_2 = 0$ and $k \geq 0$.  We will directly define the two
  witnesses of the approximate lifting. Let
  \[
    G(v) = \Pr_{x\leftarrow \OneLap_\epsilon(0)}[x = v] .
  \]
  Define the left witness $\mu_L$ on its support by
  \[
    \mu_L(i, i + k) = G(i)
  \]
  for $i \geq 0$, and the right witness $\mu_R$ on its support by
  \[
    \mu_R(j - k, j) = G(j)
  \]
  for $j \geq 0$. Evidently the marginals are correct---$\pi_1(\mu_L) =
  \pi_2(\mu_R) = \OneLap_\epsilon(0)$---so it remains to check that
  $\Delta_{k\epsilon}(\mu_L, \mu_R) \leq 0$:
  \[
    \max_{E\subseteq \mathbb{Z} \times \mathbb{Z}}
    \left(\Pr_{(x, y)\leftarrow \mu_L}[(x, y)\in E] - e^{k\epsilon}
    \Pr_{(x, y)\leftarrow \mu_R} {[(x, y)\in E]}\right) \leq 0.
  \]
  It suffices to prove this pointwise over the union of the supports of
  $\mu_L$ and $\mu_R$: for each $l \geq -k$, we need
  \[
    \mu_L(l, l + k) - e^{k\epsilon} \mu_R(l, l + k) \leq 0.
  \]
  This is evident for $l < 0$, when the first term is zero and the second
  term is non-negative. For $l \geq 0$ we need to show
  \[
    G(l) - e^{k\epsilon} G(l + k) \leq 0 ,
  \]
  which follows by direct calculation (or, the privacy of the standard
  Laplace distribution).
\end{proof}

\begin{prop}\label{prop:olap:nogrow}
Let $v_1,v_2\in\mathbb{Z}$. Then $\OneLap_\epsilon(v_1) ~\alift{\Psi}{(0,0)}~
\OneLap_\epsilon (v_2)$, where
\[
  \Psi = \{ (x_1,x_2) \in \mathbb{Z} \times \mathbb{Z} \mid x_1 - x_2 = v_1 - v_2 \} .
\]
\end{prop}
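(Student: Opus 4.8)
The plan is to derive this proposition as a direct specialization of \Cref{prop:olap:gen}, exactly mirroring the way \Cref{prop:lap:nogrow} was obtained from \Cref{prop:lap:gen}. Concretely, I would instantiate \Cref{prop:olap:gen} with the particular shift $k = v_2 - v_1$.

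First I would check that this choice of $k$ satisfies the side condition of \Cref{prop:olap:gen}, namely $k \geq v_2 - v_1$: here it holds with equality, since $k = v_2 - v_1$. Having verified the hypothesis, \Cref{prop:olap:gen} gives
\[
  \OneLap_\epsilon(v_1) ~\alift{\Psi'}{((k + v_1 - v_2)\cdot\epsilon,\,0)}~ \OneLap_\epsilon(v_2),
  \qquad
  \Psi' = \{(x_1,x_2)\in\mathbb{Z}\times\mathbb{Z} \mid x_1 + k = x_2\}.
\]
Next I would simplify both the index and the relation using $k = v_2 - v_1$: the privacy parameter is $(k + v_1 - v_2)\cdot\epsilon = ((v_2 - v_1) + v_1 - v_2)\cdot\epsilon = 0$, so the lifting is $(0,0)$; and the relation becomes $\Psi' = \{(x_1,x_2)\mid x_1 + (v_2 - v_1) = x_2\} = \{(x_1,x_2)\mid x_1 - x_2 = v_1 - v_2\}$, which is exactly the relation $\Psi$ in the statement. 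This completes the argument.

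There is no real obstacle here beyond noticing that the side condition of \Cref{prop:olap:gen} is met (with equality) for the chosen $k$; everything else is a one-line substitution. The only conceptual point worth stating explicitly is that, unlike the ordinary Laplace case where $k$ can be any integer, here $k$ is constrained by $k \geq v_2 - v_1$, but the extremal choice $k = v_2 - v_1$ is admissible and yields precisely the zero-cost "no growth" lifting.
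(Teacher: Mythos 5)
Your proof is correct, but it takes a different route than the one the paper actually uses for this proposition. You specialize \Cref{prop:olap:gen} at the extremal value $k = v_2 - v_1$, observe that the side condition $k \geq v_2 - v_1$ holds with equality, and then read off the $(0,0)$-lifting; this exactly mirrors how the paper derived \Cref{prop:lap:nogrow} from \Cref{prop:lap:gen}. The paper instead proves \Cref{prop:olap:nogrow} directly, without invoking \Cref{prop:olap:gen}: it rewrites $\Psi$ as $\{(x_1,x_2) \mid x_1 - v_1 = x_2 - v_2\}$, applies a change of variables to reduce the claim to $\OneLap_\epsilon(0) \alift{=}{(0,0)} \OneLap_\epsilon(0)$, and then concludes from \Cref{prop:alift:lift} since equality of identical distributions trivially lifts. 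Your approach has the virtue of uniformity with the two-sided Laplace case and avoids the (tacit) shift-invariance step in the paper's argument; the paper's approach has the virtue of being self-contained and not depending on the more involved witness construction in \Cref{prop:olap:gen}. Both are valid; your only extra obligation (correctly discharged) is checking that $k = v_2 - v_1$ is admissible for the one-sided lemma, a constraint that has no analogue in the two-sided case.
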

\begin{proof}
  It suffices to prove
  \[
    \OneLap_\epsilon(v_1) ~\alift{\Psi'}{(0,0)}~ \OneLap_\epsilon (v_2) ,
  \]
  where
  \[
    \Psi' = \{ (x_1,x_2) \in \mathbb{Z} \times \mathbb{Z} \mid x_1 - v_1 = x_2 - v_2 \} .
  \]
  This is equivalent to
  \[
    \OneLap_\epsilon(v_1 - v_1) ~\alift{=}{(0,0)}~ \OneLap_\epsilon (v_2 - v_2) ,
  \]
  which is obvious by \Cref{prop:alift:lift} since both sides are the same
  distribution.
\end{proof}
\fi

\iftpdp\else
\begin{figure*}[t]
$$
\begin{array}{c@{}}
 { }
 {
   \AEquiv{\Psi\subst{x_1\sidel,x_2\sider}{e_1\sidel,e_2\sider}}
     {\Ass{x_1}{e_1}}{\Ass{x_2}{e_2}}{\Psi}{0}{0}
   }
[\textsc{Assn}]
\\[4ex]

\infrule
{\AEquiv{\Phi \land b_1\sidel}{c_1}{c_2}{\Psi}{\epsilon}{\delta} 
\quad 
 \AEquiv{\Phi \land \lnot b_1\sidel}{d_1}{d_2}{\Psi}{\epsilon}{\delta} 
}
  {\AEquiv{\Phi\land b_1\sidel = b_2\sider}{\Cond{b_1}{c_1}{d_1}}{
\Cond{b_2}{c_2}{d_2}}{\Psi}{\epsilon}{\delta}}
[\textsc{Cond}]\\[4ex]
\infrule
{
 \AEquiv{\Theta \land b_1\sidel \land b_2\sider \land k = e\sidel \land e\sidel \leq n}
         {c_1}{c_2}{\Theta \land b_1\sidel = b_2\sider \land k < e\sidel}
         {\epsilon_k}{\delta_k} \hspace{1cm}
 \Theta \land e\sidel \leq 0 \Rightarrow \neg b_1\sidel 
}{
 \AEquiv{\Theta \land b_1\sidel = b_2\sider \land e\sidel \leq n}
         {\WWhile{b_1}{c_1}}{\WWhile{b_2}{c_2}}
         {\Theta \land \neg b_1\sidel \land\neg b_2 \sider}
         {\sum_{k=1}^{n}\epsilon_k}{\sum_{k=1}^{n}\delta_k}
}
[\textsc{While}] \\[4ex]

\infrule
{\AEquiv{\Phi}{c_1}{c_2}{\Psi'}{\epsilon}{\delta} 
  \quad \AEquiv{\Psi'}{c_1'}{c_2'}{\Psi}{\epsilon'}{\delta'} 
}
{\AEquiv{\Phi}{c_1;c_1'}{c_2;c_2'}{\Psi}
  {\epsilon + \epsilon'}{\delta + \delta'}}
[\textsc{Seq}]\\[4ex]

\infrule {
  \AEquiv{\Phi'}{c_1}{c_2}{\Psi'}{\epsilon'}{\delta'} \qquad
  \Phi \Rightarrow \Phi'  \qquad \Psi' \Rightarrow \Psi \qquad
  \epsilon' \leq \epsilon \qquad \delta' \leq \delta
  }
 {
   \AEquiv{\Phi}{c_1}{c_2}{\Psi}{\epsilon}{\delta}
 }
[\textsc{Conseq}] 
\end{array}
$$
\caption{Proof rules from \Saprhl}\label{fig:aprhl}
\end{figure*}
\fi

\section{Formalization in a program logic}
\label{sec:apRHL}
In this section we present a new program logic called \Saprhlp for
reasoning about differential privacy of programs written in a core
programming language with samplings from the Laplace mechanism and the
one-sided Laplace Mechanism. Our program logic \Saprhlp extends
\Saprhl, a relational Hoare logic that has been used to verify many
examples of differentially private algorithms~\citep{BartheKOZ13}.
\iftpdp\else
The main result of this section is a proof of soundness of the logic
(\Cref{thm:sound}).
\fi

\iftpdp
We will use a standard imperative language with a sampling command for the
Laplace distribution. We omit the grammar here.
\else
\paragraph*{Programs} We consider a simple imperative language with 
random sampling. The set of commands is defined inductively:
\begin{displaymath}
\begin{array}{r@{\ \ }l@{\quad}l}
\Cmd ::= & \Skip                   & \mbox{noop} \\
     \mid& \Seq{\Cmd}{\Cmd}        & \mbox{sequencing}\\
     \mid& \Ass{\Var}{\Expr}       & \mbox{deterministic assignment}\\
     \mid& \Rand{\Var}{\Lap_\epsilon(\Expr)}     
                                   & \mbox{Laplace mechanism}\\
     \mid& \Rand{\Var}{\OneLap_\epsilon(\Expr)}     
                                   & \mbox{one-sided Laplace mechanism}\\
     \mid& \Cond{\Expr}{\Cmd}{\Cmd} & \mbox{conditional}\\
     \mid& \WWhile{\Expr}{\Cmd}      & \mbox{while loop}
\end{array}
\end{displaymath}
where $\Var$ is a set of \emph{variables} and $\Expr$ is a set of
\emph{expressions}. Variables and expressions are typed, and range
over boolean, integers, databases, queries, and lists.
\fi

The semantics of programs is standard~\citep{Kozen79,BartheKOZ13}. We
first define the set $\mathsf{Mem}$ of memories to contain all
well-typed functions from variables to values.
\iftpdp
Then, commands are interpreted as functions from memories to distributions
over memories.
\else
Expressions and
distribution expressions map memories to values and distributions over
values, respectively: an expression $e$ of type $T$ is interpreted as
a function $\denot{e}:\mathsf{Mem}\rightarrow T$, whereas a
distribution expression $g$ is interpreted as a function
$\denot{g}:\mathsf{Mem} \rightarrow\SDist(\mathbb{Z})$. Finally,
commands are interpreted as functions from memories to
sub-distributions over memories, i.e.\, the interpretation of $c$ is a
function $\denot{c}: \mathsf{Mem} \rightarrow \Dist(\mathsf{Mem})$. We
refer to~\citet{Kozen79,BartheKOZ13} for an account of the semantics.
\fi

\paragraph*{Assertions and judgments}
Assertions in the logic are first-order formulae over generalized
expressions. The latter are expressions built from tagged variables
$x\sidel$ and $x\sider$, where the tag is used to determine whether
the interpretation of the variable is taken in the first memory or in
the second memory. For instance, $x\sidel=x\sider +1$ is the assertion
which states that the interpretation of the variable $x$ in the first
memory is equal to the interpretation of the variable $x$ in the
second memory plus 1. More formally, assertions are interpreted as
predicates over pairs of memories. We let $\denot{\Phi}$ denote the
set of memories $(m_1,m_2)$ that satisfy $\Phi$. The interpretation is
standard (besides the use of tagged variables) and is omitted. By
abuse of notation, we write $e\sidel$ or $e\sider$, where $e$ is a
program expression, to denote the generalized expression built
according to $e$, but in which all variables are tagged with a
$\sidel$ or $\sider$, respectively.

Judgments in both \Saprhl and \Saprhlp are of the form
$$\AEquiv{\Phi}{c_1}{c_2}{\Psi}{\epsilon}{\delta}$$
where $c_1$ and $c_2$ are statements, the precondition $\Phi$ and
postcondition $\Psi$ are relational assertions, and $\epsilon$ and
$\delta$ are non-negative reals.\iftpdp\else\footnote{%
  The original \Saprhl rules are based on a multiplicative privacy budget.
  We adapt the rules to an additive privacy parameter for consistency with the
  rest of the article and the broader privacy literature.}\fi{}
Informally, a judgment of the above form is valid if the two
distributions produced by the executions of $c_1$ and $c_2$ on any two initial
memories satisfying the precondition $\Phi$ are related by the
$(\epsilon,\delta)$-lifting of the postcondition $\Psi$. Formally,
the judgment
$$\AEquiv{\Phi}{c_1}{c_2}{\Psi}{\epsilon}{\delta}$$
is \emph{valid} iff for every two memories $m_1$ and $m_2$, such that
$m_1~\denot{\Phi}~m_2$, we have
$$
(\dsem{m_1}{c_1})
~\alift{\denot{\Psi}}{(\epsilon,\delta)} ~
(\dsem{m_2}{c_2}) .
$$

\begin{figure*}[t]
$$\begin{array}{c}
\infrule{\forall i. \AEquiv{\Phi}{c_1}{c_2}{x\sidel = i \Rightarrow
    x\sider = i}{\epsilon}{
\delta_i}\hspace{2cm} \sum_{i\in I} \delta_i\leq \delta}{
\AEquiv{\Phi}{c_1}{c_2}{x\sidel=x\sider}{\epsilon}{\delta}}[\textsc{Forall-Eq}]
\\[4ex]
\infrule {
  }
  {\AEquiv{|k + e_1\sidel - e_2\sider| \leq k'}
   {\Rand{y_1}{\Lap_\epsilon(e_1)}}{\Rand{y_2}{\Lap_\epsilon(e_2)}}{y_1\sidel+k=y_2\sider}
   {k' \cdot \epsilon}{0}}
[\textsc{LapGen}]
\\[4ex]
\infrule {y_1 \notin FV(e_1) \qquad y_2 \notin FV(e_2)}
  {\AEquiv{\top}
   {\Rand{y_1}{\Lap_\epsilon(e_1)}}{\Rand{y_2}{\Lap_\epsilon(e_2)}}{
   y_1\sidel - y_2\sider = e_1\sidel - e_2\sider}
   {0}{0}}
[\textsc{LapNull}]
\iftpdp\else
\\[4ex]
\infrule {
  }
  {\AEquiv{0 \leq k + e_1\sidel -e_2\sider \leq k'}
   {\Rand{y_1}{\OneLap_\epsilon(e_1)}}{\Rand{y_2}{\OneLap_\epsilon(e_2)}}{y_1\sidel
     + k = y_2\sider}
   {k' \cdot \epsilon}{0}}
[\textsc{OneLapGen}]
\\[4ex]
\infrule {y_1 \notin FV(e_1) \qquad y_2 \notin FV(e_2) }
  {\AEquiv{\top}
   {\Rand{y_1}{\OneLap_\epsilon(e_1)}}{\Rand{y_2}{\OneLap_\epsilon(e_2)}}{
   y_1\sidel - y_2\sider = e_1\sidel - e_2\sider}
   {0}{0}}
[\textsc{OneLapNull}]

\\[4ex]

\infrule{\AEquiv{\Phi\land b_1\sidel}{c_1}{c}{\Psi}{\epsilon}{\delta}\quad
\AEquiv{\Phi\land \neg b_1\sidel}{d_1}{c}{\Psi}{\epsilon}{\delta}}{
\AEquiv{\Phi}{\Cond{b_1}{c_1}{d_1}}{c}{\Psi}{\epsilon}{\delta}
}[\textsc{Cond-L}]

\\[4ex]
\infrule{\AEquiv{\Phi\land b_2\sider}{c}{c_2}{\Psi}{\epsilon}{\delta}\quad
\AEquiv{\Phi\land \neg b_2\sider}{c}{d_2}{\Psi}{\epsilon}{\delta}}{
\AEquiv{\Phi}{c}{\Cond{b_2}{c_2}{d_2}}{\Psi}{\epsilon}{\delta}
}[\textsc{Cond-R}]
\fi
\end{array}
$$

\iftpdp
\caption{Selected proof rules from \Saprhlp}
\else
\caption{Proof rules from \Saprhlp}
\fi
\label{fig:aprhlp}
\end{figure*}

\paragraph*{Proof system}
\iftpdp
We defer the presentation of the proof system of \Saprhl to the extended
version.
\else
\Cref{fig:aprhl} presents the main rules from \Saprhl excluding
the sampling rule, which we generalize
in \Saprhlp. We briefly comment on some of
these rules.

The rule \rname{Seq} for sequential composition generalizes the sequential
composition theorem of differential privacy, which intuitively
corresponds to the case where the postcondition of the composed
commands is equality. This generalization allows \Saprhl to prove
differential privacy using the coupling composition principle when 
the standard composition theorem is insufficient.

The rule \rname{While} for while loops can be seen as a generalization of a $k$-fold
composition theorem for differential privacy. Again, it allows to
consider arbitrary postconditions, whereas the composition theorem
would correspond to the case where the postcondition of the loop is
equality (in conjunction with negation of the guards). We often use
two simpler instances of the rule. The first one corresponds to the
case where the values of $\epsilon_k$ and $\delta_k$ are independent
of $k$, i.e.\, $\epsilon_k=\epsilon$ and $\delta_k=\delta$, yielding a
bound of $\langle n\cdot \epsilon, n\cdot \delta\rangle$. The second
one corresponds to the case where a single iteration carries a privacy
cost, as shown in the rule \rname{WhileExt} in \Cref{fig:while}. This weaker rule
is in fact sufficient for proving privacy of several of our examples,
including the Above Threshold algorithm (but not the Sparse Vector algorithm,
which also uses the aforementioned instance of the while rule), the Exponential
mechanism, and Report-noisy-max.
\fi

\Cref{fig:aprhlp} collects the new rules in \Saprhlp, which are all derived from
the new proof principles we saw in the previous section. The first rule
\rname{Forall-Eq} allows proving differential privacy via pointwise privacy;
this rule reflects \Cref{prop:pw:dp}.

The next pair of rules, \rname{LapGen} and \rname{LapNull}, reflect the liftings
of the distributions of the Laplace mechanism presented in
\Cref{prop:lap:gen,prop:lap:nogrow} respectively.  Note that we need a
side-condition on the free variables in \rname{LapNull}---otherwise, the sample
may change $e_1$ and $e_2$.%
\iftpdp\else
The following pair of rules, \rname{OneLapGen} and
\rname{OneLapNull}, give similar liftings for the one-sided Laplace mechanism
following \Cref{prop:olap:gen,prop:olap:nogrow} respectively.

Finally, the last pair of rules allows reasoning about a conditional
while treating the other command abstractly. These so-called
\emph{one-sided} rules were already present in the logic
\textsf{pRHL}, a predecessor of \Saprhl based on exact
liftings~\citep{BartheGZ09}, but they were never needed in \Saprhl. In
\Saprhlp the one-sided rules are quite useful, in conjunction with our
richer sampling rules, for reasoning about two conditionals that may
take different branches.
\fi

\iftpdp\else
\paragraph*{Soundness}
The soundness of the new rules immediately follows from the results of the
previous section, while soundness for the \Saprhl rules was established
previously~\citep{BartheKOZ13}.

\begin{thm}\label{thm:sound}
All judgments derivable in \Saprhlp are valid.
\end{thm}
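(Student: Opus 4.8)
The plan is to prove soundness by a single induction on the derivation of a judgment in \Saprhlp. Since \Saprhlp is obtained from \Saprhl by keeping the rules of \Cref{fig:aprhl} and adding those of \Cref{fig:aprhlp}, a derivation may interleave old and new rules arbitrarily, but the induction does not care: for a derivation whose last step is one of the \Saprhl rules (\rname{Assn}, \rname{Cond}, \rname{While}, \rname{Seq}, \rname{Conseq}), the soundness arguments of \citet{BartheKOZ13} carry over unchanged (up to the routine reparametrization of the privacy budget remarked upon earlier), because those arguments consume only the \emph{validity} of the premises, which the induction hypothesis supplies regardless of how the premises were derived. It therefore remains to check the six new rules, and each is a semantic reflection of one of the lifting principles established earlier: \rname{LapGen} and \rname{LapNull} of \Cref{prop:lap:gen,prop:lap:nogrow}, \rname{OneLapGen} and \rname{OneLapNull} of \Cref{prop:olap:gen,prop:olap:nogrow}, and \rname{Forall-Eq} of \Cref{prop:pw:dp}.

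For the four sampling rules the argument is short. Fix memories $m_1, m_2$ satisfying the precondition; unfolding the semantics, $\dsem{m_1}{\Rand{y_1}{\Lap_\epsilon(e_1)}}$ is the image of $\Lap_\epsilon(\dsem{m_1}{e_1})$ under the update $v \mapsto m_1[y_1 \mapsto v]$, and symmetrically on the right. For \rname{LapGen} the precondition is exactly $|k + \dsem{m_1}{e_1} - \dsem{m_2}{e_2}| \le k'$, so \Cref{prop:lap:gen} with $v_1 = \dsem{m_1}{e_1}$ and $v_2 = \dsem{m_2}{e_2}$ gives $\Lap_\epsilon(v_1) \alift{\{x_1 + k = x_2\}}{(|k+v_1-v_2|\,\epsilon,\,0)} \Lap_\epsilon(v_2)$, which is in particular a $(k'\epsilon, 0)$-lifting since $|k+v_1-v_2|\,\epsilon \le k'\epsilon$ and $\Delta_\epsilon$ is non-increasing in $\epsilon$. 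Pushing both witnesses forward along the two update maps---which cannot increase $\Delta_\epsilon$ and sends the relation $\{x_1+k=x_2\}$ into $\denot{y_1\sidel + k = y_2\sider}$---yields the required lifting (the other program variables are untouched and absent from the postcondition). The case \rname{OneLapGen} is identical via \Cref{prop:olap:gen}, and \rname{LapNull}, \rname{OneLapNull} are the same via \Cref{prop:lap:nogrow,prop:olap:nogrow}; there the side conditions $y_i \notin FV(e_i)$ are precisely what makes $\dsem{m_i[y_i \mapsto v]}{e_i} = \dsem{m_i}{e_i}$, so that the postcondition $y_1\sidel - y_2\sider = e_1\sidel - e_2\sider$ genuinely refers to the pre-sampling values of $e_1$ and $e_2$. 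The one-sided conditional rules \rname{Cond-L} and \rname{Cond-R} are classical (they already appeared in \textsf{pRHL}): given $m_1 \denot{\Phi} m_2$ one case-splits on the boolean $\dsem{m_1}{b_1}$, observes that $\Cond{b_1}{c_1}{d_1}$ then denotes the relevant branch and that the governing premise's precondition is met, and reads the conclusion off that premise's validity.

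The real work is in \rname{Forall-Eq}, which I expect to be the main obstacle. Fix $m_1 \denot{\Phi} m_2$, set $\mu_j = \dsem{m_j}{c_j}$, and write $=_x$ for the memory relation $\{(m,m') : m(x)=m'(x)\}$ and $\Psi_i$ for $\{(m,m') : m(x) = i \Rightarrow m'(x) = i\}$. By the induction hypothesis there are, for each value $i$, witnesses $(\mu_L^i, \mu_R^i)$ for $\mu_1 \alift{\Psi_i}{(\epsilon,\delta_i)} \mu_2$, with $\sum_i \delta_i \le \delta$, and we must exhibit a single witness pair for $\mu_1 \alift{=_x}{(\epsilon,\delta)} \mu_2$ (note that \Cref{prop:pw:dp} itself is stated for a single computation and value-level equality, so a direct construction at the level of memories is cleaner than invoking it verbatim). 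The key point is that restricting $\mu_L^i$, respectively $\mu_R^i$, to the event ``$x\sidel = i$'' forces ``$x\sider = i$'' by the support condition; hence $\mu_L := \sum_i (\mu_L^i \text{ restricted to } x\sidel = i)$ and $\mu_R := \sum_i (\mu_R^i \text{ restricted to } x\sidel = i)$ are both supported in $=_x$, with $\pi_1(\mu_L) = \mu_1$ exactly and $\pi_2(\mu_R) \le \mu_2$ pointwise. The small deficit of $\mu_R$ is patched by adding mass $(\mu_2 - \pi_2(\mu_R))(m)$ on each diagonal point $(m,m)$, which stays inside $=_x$ and, enlarging only the right witness, can only decrease the divergence. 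Finally, because the summands of $\mu_L$ and $\mu_R$ are supported on the pairwise-disjoint sets $\{x\sidel = x\sider = i\}$, the divergence decomposes: $\Delta_\epsilon(\mu_L, \mu_R) \le \sum_i \Delta_\epsilon\bigl(\mu_L^i \text{ restricted to } x\sidel=i,\ \mu_R^i \text{ restricted to } x\sidel=i\bigr) \le \sum_i \Delta_\epsilon(\mu_L^i, \mu_R^i) \le \sum_i \delta_i \le \delta$, as required. Getting the marginals and this divergence bookkeeping exactly right is the one step that goes beyond unfolding definitions and quoting the earlier propositions; every other rule is essentially immediate.
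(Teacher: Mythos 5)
Your argument is correct, and it takes essentially the same route the paper implicitly intends: induction on the derivation, deferring to \citet{BartheKOZ13} for the inherited rules and appealing to the new propositions for the new rules. The paper itself only states that soundness of the new rules ``immediately follows from the results of the previous section,'' so your contribution is mainly in filling in what the quoted propositions actually deliver at the level of \Saprhl judgments.

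Two points are worth singling out. First, for the four sampling rules, your handling is right but slightly glosses the step from a lifting on $\mathbb{Z}\times\mathbb{Z}$ to a lifting on $\mathsf{Mem}\times\mathsf{Mem}$: the pushforward along the update maps $v \mapsto m_j[y_j \mapsto v]$ works because these maps are injective (so marginals and supports transport faithfully) and because $\Delta_\epsilon$ is monotone under pushforward; you mention this but it deserves to be said explicitly as a lemma about approximate liftings, since the paper never states it. Second, and more importantly, your observation that \Cref{prop:pw:dp} does \emph{not} apply verbatim to \rname{Forall-Eq} is correct and is the one place where the paper's ``immediately follows'' hides real content: \Cref{prop:pw:dp} produces a divergence bound on the $x$-marginals, whereas the rule's conclusion is an approximate lifting of the memory-level relation $x\sidel = x\sider$, and there is no generic way to lift a divergence bound on marginals back to a memory-level witness. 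Your direct construction---restrict each $\mu_L^i,\mu_R^i$ to the event $x\sidel = i$, sum over $i$, patch the right witness on the diagonal to restore its second marginal, and observe that the divergence decomposes because the pieces have pairwise-disjoint supports---is sound; the key checks (restriction to a common event never increases $\Delta_\epsilon$, enlarging the right witness never increases $\Delta_\epsilon$, and the patched $\mu_R$ stays a sub-distribution because its second marginal is exactly $\mu_2$) all go through. So the proof is correct, and more careful than the text it is reconstructing.
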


\begin{figure*}
$$\infrule
{
\begin{array}{l}
 \AEquiv{\Phi \land i <e\sidel}
         {c_1}{c_2}{\Psi}
         {0}{0}
 \qquad
 \AEquiv{\Phi\land e\sidel=i}
         {c_1}{c_2}{\Psi}
         {\epsilon}{\delta}
 \qquad
 \AEquiv{\Phi \land e\sidel<i}
         {c_1}{c_2}{\Psi}
         {0}{0}
 \\[2ex]
 \Theta \land e\sidel \leq 0 \Rightarrow \neg b_1\sidel 
 \qquad
 \Phi \triangleq \Theta \land b_1\sidel \land b_2\sider \land k = e\sidel
 \qquad
 \Psi \triangleq \Theta \land b_1\sidel = b_2\sider \land e\sidel < k
\end{array}
}{
 \AEquiv{\Theta \land b_1\sidel = b_2\sider}
         {\WWhile{b_1}{c_1}}{\WWhile{b_2}{c_2}}
         {\Theta \land \neg b_1\sidel \land\neg b_2 \sider}
         {\epsilon}{\delta}
}[\textsc{WhileExt}]
$$
(Note that the two premises for $i<e\sidel$ and $i>e\sidel$ can be
combined. However, we often use different reasoning for these
cases, so we prefer to present the rule with 3 premises.)
\caption{Specialized proof rule for while loops}\label{fig:while}
\end{figure*}
\fi

\iftpdp\else
\section{Exponential mechanism}
\label{sec:exp-mech}
In this section, we provide a formal proof of the \emph{Exponential mechanism}
of \citet{MT07}. While there is existing work that proves differential privacy
of this mechanism~\citep{BartheKOZ13}, the proofs operate on the raw
denotational semantics. In contrast, we work entirely within our program logic.

The Exponential mechanism is designed to privately compute the best
response from a set $\mathcal{R}$ of possible response, according to
some integer-valued \emph{quality score} function $\qscore$ that takes
as input an element in $\mathcal{R}$ and a database $d$. Given a
database $d$ and a $k$-sensitive quality score function $\qscore$, the Exponential
mechanism $\ExpMech(d,\qscore)$ outputs an element $r$ of the range
$\mathcal{R}$ with probability proportional to
\[
  \Pr[r] \propto \exp{\left(\frac{\epsilon \cdot \qscore(r,d)}{2k}\right)}.
  \]
The shape of the distribution ensures that the Exponential mechanism
favors elements with higher quality scores. 

The seminal result of \citet{MT07} establishes differential privacy for this
mechanism.
\begin{thm}
Assume that the quality score is $1$-sensitive, i.e.\, for every
output $r$ and  adjacent databases $d, d'$,
$$|\qscore(r, d) - \qscore(r, d')| \leq 1 .$$ 
Then the probabilistic computation that maps $d$ to $\ExpMech(d,\qscore)$ is
$(\epsilon,0)$-differentially private.
\end{thm}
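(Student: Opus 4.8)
The plan is to carry out the entire argument inside \Saprhlp, following the pen-and-paper ``fix the output, then shift the noise'' structure. First I would realize $\ExpMech(d,\qscore)$ as an imperative program $c$ that scans the finite range $\mathcal{R}$ in a fixed order, drawing for each candidate $r$ a noisy score $s \leftarrow \OneLap_{\epsilon/2}(\qscore(r,d))$ (one-sided Laplace at scale $\epsilon/2k$, here $k=1$), keeping a running best pair $(\mathit{br},\mathit{bs})$, and finally returning $\mathit{br}$; this is the report-noisy-max scheme with one-sided noise, and it realizes the Exponential mechanism, so it suffices to prove that $c$ is $(\epsilon,0)$-private for the precondition $\Phi$ asserting that $d\sidel$ and $d\sider$ are adjacent. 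By the rule \rname{Forall-Eq} (the logical counterpart of \Cref{prop:pw:dp}), this reduces to proving, for each target $r^\ast \in \mathcal{R}$, the pointwise judgment $\AEquiv{\Phi}{c}{c}{\mathit{br}\sidel = r^\ast \Rightarrow \mathit{br}\sider = r^\ast}{\epsilon}{0}$ with per-output cost $\delta_{r^\ast}=0$ (so $\sum_{r^\ast}\delta_{r^\ast}=0$ and $\delta=0$).

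Fix a target $r^\ast$, at position $j^\ast$ in the scan. I would construct the witness coupling iteration by iteration. At the iteration that processes $r^\ast$, apply \rname{OneLapGen} with shift $k=1$: since $\qscore$ is $1$-sensitive the side condition $0 \leq 1 + \qscore(r^\ast,d\sidel) - \qscore(r^\ast,d\sider) \leq 2$ holds, the cost is $2\cdot(\epsilon/2)=\epsilon$, and the postcondition gives $s\sidel + 1 = s\sider$. At every other iteration, apply \rname{OneLapNull}, paying nothing and obtaining $s\sidel - s\sider = \qscore(r,d\sidel) - \qscore(r,d\sider)$, hence $s\sider \leq s\sidel + 1$ by $1$-sensitivity. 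Since the conditional update in the loop body may take different branches in the two runs (a fresh candidate may displace the current best on one side but not the other), I would reason about the body with the one-sided rules \rname{Cond-L}/\rname{Cond-R} plus a small case analysis. The iterations are then glued with the specialized loop rule \rname{WhileExt}, whose three premises---for scan indices before, equal to, and after $j^\ast$---charge the whole $\langle\epsilon,0\rangle$ budget to the single $r^\ast$-iteration, exactly as in the Above Threshold analysis.

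The substantive part is choosing the loop invariant $\Theta$ so that \rname{WhileExt} applies and the postcondition follows. Informally, $\Theta$ should maintain: while $r^\ast$ has not yet been processed, $\mathit{bs}\sider \leq \mathit{bs}\sidel + 1$; and once $r^\ast$ has been processed, $\mathit{br}\sidel = r^\ast \Rightarrow (\mathit{br}\sider = r^\ast \wedge \mathit{bs}\sider = \mathit{bs}\sidel + 1)$. The crucial step is at the $r^\ast$-iteration: from $\mathit{bs}\sider \leq \mathit{bs}\sidel + 1$ and the $+1$ shift, if $s\sidel$ makes $r^\ast$ the new running best on the left then $s\sider = s\sidel + 1 \geq \mathit{bs}\sidel + 1 \geq \mathit{bs}\sider$ makes $r^\ast$ the new running best on the right, and the gap $\mathit{bs}\sider - \mathit{bs}\sidel$ becomes exactly $1$; at every later iteration the coupling $s\sider \leq s\sidel + 1$ then ensures that a candidate failing to displace $r^\ast$ on the left also fails on the right, so the gap is preserved. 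Threading this through yields $\mathit{br}\sidel = r^\ast \Rightarrow \mathit{br}\sider = r^\ast$ at loop exit. I expect this bookkeeping to be the main obstacle---propagating the ``gap exactly one'' property through both the shifted iteration and the unshifted ones, handling the body's desynchronized conditional, and fitting everything to the precise shape of \rname{WhileExt} (guard synchronization and the decreasing variant)---since all the probabilistic content is already packaged into \rname{OneLapGen}, \rname{OneLapNull}, and \rname{Forall-Eq}.
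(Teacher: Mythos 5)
Your proposal follows the paper's proof essentially step for step: the same report-noisy-max implementation with one-sided Laplace noise, the same reduction via \rname{Forall-Eq} to a pointwise judgment for each target output, the same use of \rname{WhileExt} split into before/at/after the critical iteration, the same pairing of \rname{OneLapGen} (shift $k=1$, cost $2\cdot\epsilon/2=\epsilon$) at the target iteration with \rname{OneLapNull} everywhere else, and the same handling of the desynchronized conditional via \rname{Cond-L}/\rname{Cond-R}. The only cosmetic difference is your one-sided pre-critical invariant $\mathit{bs}\sider \le \mathit{bs}\sidel + 1$ where the paper records the two-sided bound $|bq\sidel - bq\sider| \le 1$; both are adequate, and your post-critical invariant is a syntactic rewriting of the paper's $\Theta_{\geq}$, so the argument is correct.
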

While there does not seem to be much of a program to verify, it is
known that the Exponential mechanism can be implemented more explicitly in terms
of the one-sided Laplace mechanism~\citep{DR14}. Informally, the code loops
through all the possible output values, adding one-sided Laplace noise to the
quality score for the value/database pair. Throughout the computation, the code
tracks the current highest noisy score and the corresponding element. Finally,
it returns the top element. For the sake of simplicity we assume that
$\mathcal{R}=\{1,\ldots, R\}$ for some $R\in\mathbb{N}$; generalizing to an
arbitrary finite set poses little difficulty for the verification.
\Cref{fig:expmech} shows the code of the implementation.
\begin{figure}
$$\begin{array}{l}
    \Ass{r}{1}; \Ass{bq}{0}; \\
    \WWhile{r \leq R}{} \\
    \quad \Rand{cq}{\OneLap_{\epsilon/2}(\qscore(d,r))};  \\
    \quad \mathsf{if}~ (cq > bq \vee r=1) ~\mathsf{then} 
~\Ass{\mathit{max}}{r}; \Ass{bq}{cq}; \\
    \quad \Ass{r}{r+1}; \\
    \mathsf{return}~\mathit{max}
  \end{array}
$$
\caption{Implementation of the Exponential mechanism}\label{fig:expmech}
\end{figure}

\paragraph*{Informal proof}
The privacy proof for the Exponential mechanism cannot follow from the
composition theorems of differential privacy---the one-sided Laplace
noise does not satisfy differential privacy, so there is nothing to
compose. Nonetheless, we can still show $(\epsilon,0)$-differential
privacy using our lifting-based techniques. By~\Cref{prop:pw:dp}, it
suffices to show that for every integer $i$ and quality score
$\qscore$, the output of $\ExpMech$ on two adjacent databases yields
sub-distributions on memories that are related by the
$(\epsilon,0)$-lifting of the interpretation of the assertion
$$
\mathit{max}\sidel=i \Rightarrow \mathit{max}\sider =i.
$$
We outline a coupling argument. First, we consider
iterations of the loop body in which the loop counter $r$ satisfies
$r<i$. In this case, we couple the two samplings using the rule
[\textsc{OneLapNull}], using adjacency of the two databases and
$1$-sensitivity of the quality score function to establish the
$(0,0)$-lifting:
$$
\mathit{max}\sidel < i ~\land ~ \mathit{max}\sider < i 
\land |bq\sidel - bq\sider |
\leq 1 .
$$ 
The interesting case is $r=i$. In this case, we use the rule
     [\textsc{OneLapGen}] to couple the random samplings so that
$$
cq\sidel +1 =cq\sider .
$$ 
This coupling has privacy cost $(\epsilon,0)$ and ensures that the
following $(\epsilon,0)$-lifting holds at the end of the $i$th 
iteration:
$$(\mathit{max}\sidel=\mathit{max}\sider = i \land bq\sidel +1
=bq\sider)\vee \mathit{max}\sidel\neq i$$
Using the rule [\textsc{OneLapNull}] repeatedly, we couple the random
samplings from the remaining iterations to prove that the above
$(\epsilon,0)$-lifting remains valid through subsequent
iterations---note that couplings for iterations beyond $i$
incur no privacy cost. Finally, we apply the rule of consequence to
conclude the desired $(\epsilon,0)$-lifting:
$$
\mathit{max}\sidel=i \Rightarrow \mathit{max}\sider =i
$$

\paragraph*{Formal proof}
We prove the following \Saprhlp judgment, which entails
$(\epsilon,0)$-differential privacy:
$$\AEquiv{\Phi}{\ExpMech}{\ExpMech}{
\mathit{max}\sidel= \mathit{max}\sider}{\epsilon}{0}$$
where $\Phi$ denotes the precondition
$$\begin{array}{ll}
& \mathsf{adj}(d\sidel,d\sider) \\
\land & \qscore\sidel =\qscore\sider \\
\land & \forall r\in\mathcal{R}.~
|\qscore\sidel (d\sidel,r) -\qscore\sidel (d\sider,r)|\leq 1 .
\end{array}
$$
The conjuncts of the precondition are self-explanatory: the first
states that the two databases are adjacent, the second
states that the two score functions are equal, and the last
states that the quality score function is $1$-sensitive.

By the rule [\textsc{Forall-Eq}], it suffices to prove
$$
  \AEquiv{\Phi}{\ExpMech}{\ExpMech}
  { (\mathit{max}\sidel\ = i) \Rightarrow (\mathit{max}\sider\ = i)}{\epsilon}{0} .
$$
for every $i\in\mathbb{Z}$.  The main step is to apply the [\textsc{WhileExt}]
rule with a suitably chosen loop invariant $\Theta$. We set $\Theta$ to be
$$(r\sidel <i \Rightarrow \Theta_{<}) \wedge 
(r\sidel \geq i \Rightarrow \Theta_{\geq}) 
\wedge  r\sidel=r\sider ,
$$
where $\Theta_{<}$ stands for
$$
\mathit{max}\sidel < i ~\land ~ \mathit{max}\sider < i \land |bq\sidel - bq\sider |
\leq 1
$$
and $\Theta_{\geq}$ stands for
$$
(\mathit{max}\sidel=\mathit{max}\sider = i \land bq\sidel +1  =bq\sider)\vee
\mathit{max}\sidel\neq i .
$$
Omitting the assertions required for proving termination and
synchronization of the loop iterations (which follows from the
conjunct $r\sidel=r\sider$), we have to prove three different
judgments:
\begin{itemize}
\item case $r<i$: $\AEquiv{r\sidel < i \land \Theta_{<}}{c}{c}{\Theta_<}{0}{0}$
\item case $r=i$: $\AEquiv{r\sidel = i \land \Theta_{<}}{c}{c}{\Theta_{\geq}}{\epsilon}{0}$
\item case $r>i$: $\AEquiv{r\sidel > i \land \Theta_{\geq}}{c}{c}{\Theta_{\geq}}{0}{0}$
\end{itemize}
where $c$ denotes the loop body of $\ExpMech$:
$$
\begin{array}{l}
\Rand{cq}{\OneLap_{\epsilon/2}(\qscore(d,r))};  \\
\mathsf{if}~ (cq > bq \vee r=1) ~\mathsf{then} ~\Ass{\mathit{max}}{r}; \Ass{bq}{cq}; \\
\Ass{r}{r+1}
\end{array}
$$ 
Corresponding conditional statements may take the different branches, so we
apply one sided-rules [\textsc{Cond-L}] and [\textsc{Cond-R}].

\paragraph{Report-noisy-max}
A closely-related mechanism is
\emph{Report-noisy-max} (see, e.g., \citet{DR14}).
This algorithm has the exact same code except that it samples from
the standard (two-sided) Laplace distribution rather than the one-sided Laplace
distribution. It is straightforward to prove privacy for this
modification with the axiom \rname{LapGen} (resp. \rname{LapNull}) for
the standard Laplace distribution in place of \rname{OneLapGen} (resp.
\rname{OneLapNull}).
\fi

\section{Above Threshold algorithm}
\label{app:sparse}
The \emph{Sparse Vector} algorithm is the canonical example of a
program whose privacy proof goes beyond proofs of privacy primitives
and composition theorem. The core of the algorithm is the Above
Threshold algorithm. In this section, we prove that the latter (as
modeled by the program \textsf{AboveT}) is
$(\epsilon,0)$-differentially private; privacy for the full mechanism
follows by sequential composition.

\paragraph*{Informal proof}
By~\Cref{prop:pw:dp}, it suffices to show that for every integer $i$,
the output of $\textsf{AboveT}$ on two adjacent databases yields two
sub-distributions over $\mathsf{Mem}$ that are related by the
$(\epsilon,0)$-lifting of the interpretation of the assertion
$$
r\sidel=i \Rightarrow r\sider =i.
$$ 
The coupling proof goes as follows. We start by coupling the
samplings of the noisy thresholds so that $T\sidel + 1 = T\sider$; the
cost of this coupling is $(\epsilon/2,0)$. For the first $i-1$
queries, we couple the samplings of the noisy query outputs using the
rule [\textsc{LapNull}]. By $1$-sensitivity of the queries and
adjacency of the two databases, we know
$\mathsf{evalQ}(Q[j],d)\sider - \mathsf{evalQ}(Q[j],d) \sidel \leq 1$,
so
$$
S\sidel < T\sidel \Rightarrow S\sider < T\sider .
$$
Thus, if side $\sidel$ does not change the value of $r$, neither
does side $\sider$. In fact, we have the stronger invariant
$$
r\sidel=|Q|+1 \Rightarrow r\sider=|Q|+1 \wedge 
  (r\sidel=|Q|+1 \vee r\sidel < i) ,
$$ 
where $r = |Q| + 1$ means that the loop has not exceeded the threshold yet.

When we reach the $i$th iteration and $i < |Q| + 1$, we couple the
samplings of $S$ so that $S\sidel + 1 = S\sider$; the cost of this
coupling is $(\epsilon/2,0)$. Because $T\sidel + 1 = T\sider$ and $S\sidel
+ 1 = S\sider$, we enter the conditional in the second execution
as soon as we enter the conditional in the first execution.
For the remaining iterations $r > i$, it is easy to prove
$$
r\sidel=i \Rightarrow r\sider =i.
$$

\paragraph*{Formal proof}
We prove the following \Saprhlp judgment, which entails
$(\epsilon,0)$-differential privacy:
$$
\AEquiv{\Phi}{\textsf{AboveT}}{\textsf{AboveT}}{
r\sidel =r\sider}{\epsilon}{0} ,
$$
where $\Phi$ denotes the precondition
$$\begin{array}{ll}
& \mathsf{adj}(d\sidel,d\sider) \\
\land & t\sidel=t\sider \\
\land & Q\sidel=Q\sider \\
\land & \forall j.~ 
\left| \mathsf{evalQ}(Q\sidel [j],d\sidel)- 
\mathsf{evalQ}(Q\sider [j],d\sider) \right|
\leq 1 .
\end{array}
$$
The conjuncts of the precondition are straightforward: the first
states that the two databases are adjacent, the second and
third state that $Q$ and $t$ coincide in both runs, and
the last states that all queries are $1$-sensitive.
By the rule [\textsc{Forall-Eq}], it suffices to prove
$$
  \AEquiv{\Phi}{\textsf{AboveT}}{\textsf{AboveT}}
  { (r\sidel\ = i) \Rightarrow (r\sider\ = i)}{\epsilon}{0} .
$$
for every $i\in\mathbb{Z}$.

We begin with the three initializations:
\[
\begin{array}{l}
 \Ass{j}{1};\\
 \Ass{r}{|Q|+1};\\
 \Rand{T}{\Lap_{\epsilon}(t)}; \\
\end{array}
\]
This command $c_0$ computes a noisy version of the threshold $t$. We use the rule
\rname{LapGen} with $\epsilon=\epsilon/2$, $k=1$ and $k'=k$, noticing
that $t$ is the same value in both sides. This proves the
judgment
\[
  \aprhl{c_0}{c_0}
  { \Phi}
  { T\sidel + 1 = T\sider }{\epsilon/2} .
\]
Notice that the $\epsilon/2$ we are paying here is \emph{not} for the privacy
of the threshold---which is not private information!---but rather for
ensuring that the noisy thresholds are \emph{one apart} in the two runs.

Next, we consider the main loop $c_1$:
\[
  \begin{array}{l}
\WWhile{j<|Q|}{} \\
    \quad \Rand{S}{\Lap_{\epsilon/4}(\mathsf{evalQ}(Q[j],d))};  \\
    \quad \mathsf{if}~ (T\leq S~ \land r = |Q|+ 1)~\mathsf{then}~\Ass{r}{j}; \\
    \quad \Ass{j}{j + 1}; \\
  \end{array}
\]
and prove the judgment
\[
  \aprhl{c_1}{c_1}
  { \Phi\land T\sidel + 1 = T\sider}
  {(r\sidel\ = i) \Rightarrow (r\sider\ = i)}{\epsilon/2}
\]
with the \rname{WhileExt} rule.
\iftpdp\else
The proof is similar to the one
for the Exponential mechanism, using invariants from the informal proof.
\fi

\iftpdp\else
\paragraph{Other versions of Above Threshold}
As noted in the introduction, different versions of Above Threshold have been
considered in the literature. One variant returns the first noisy value above
threshold; see \Cref{fig:above:value} for the code. While this was
thought to be private, errors in the proof were later
uncovered. Under our coupling proof, the error is obvious:
we need to prove
$v\sidel = v\sider$ for the result to be private, so we need
$\mathsf{evalQ}(Q[i],d\sidel) = \mathsf{evalQ}(Q[i],d\sider)$ 
after the critical iteration $r = i$. But we have
already coupled $\mathsf{evalQ}(Q[i],d\sidel)+1 = \mathsf{evalQ}(Q[i],d\sider)$
during this iteration. \citet{lyu2016understanding} provide further discussion
of this, and other, incorrect implementations of the Sparse Vector technique.

On the other hand, it is possible to prove
$(2\epsilon,0)$-differential privacy for a modified version of the
algorithm, where the returned value uses fresh noise (e.g.\, by
adding after the loop has completed the sampling
$\Rand{v}{\Lap_{\epsilon}(\mathsf{evalQ}(Q[r],d))}$).

Another interesting variant of the algorithm deals with streams of
queries. If the output of the queries is
uniformly bounded below, then the program terminates with probability
$1$ and the proof proceeds as usual.  However, if the answers to the
stream of queries are below the threshold and falling, the
probability of non-termination can be positive. The interaction of
non-termination and differential privacy is unusual;
most works assume that algorithms always terminate.

The Sparse Vector technique has also been studied by the database
community. Recent work by \citet{DBLP:journals/corr/ChenM15e} shows
that many proposed generalizations of the Sparse Vector algorithm are
not differentially private.

\begin{figure}
\[
  \begin{array}{l}
    \Ass{i}{1}; \Ass{v}{0};\Ass{r}{|Q|+1};\\
    \Rand{T}{\Lap_{\epsilon/2}(t)}; \\
    \WWhile{i<|Q|}{} \\
    \quad \Rand{S}{\Lap_{\epsilon/4}(\mathsf{evalQ}(Q[i],d))};  \\
    \quad \mathsf{if}~ (T\leq S~ \land r = |Q|+ 1)~\mathsf{then}~
     \Ass{r}{i};~\Ass{v}{S} \\
    \quad \Ass{i}{i + 1}; \\
    \mathsf{return}~v
  \end{array}
\]
\caption{Buggy Above Threshold algorithm}\label{fig:above:value}
\end{figure}

\section{Related work}
Coupling is an established tool in probability theory, but it seems less
familiar to computer science. It was only quite recently that couplings have
been used in cryptography; according to \citet{HoangR10}, who use couplings to
reason about generalized Feistel networks, \citet{Mironov02} first used this
technique in his analysis of RC4. Similarly, we are not aware of couplings in
differential privacy, though there seems to be an implicit coupling argument
by~\citet{DNRR15}. There are seemingly few applications of coupling in formal
verification, despite considerable research on probabilistic bisimulation (first
introduced by~\citet{LarsenS89}) and probabilistic relational program logics
(first introduced by~\citet{BartheGZ09}). The connection between
liftings and couplings was recently noted by~\citet{BartheEGHSS15}.

There are many language-based techniques for proving differential privacy for
programs, including dynamic checking~\citep{pinq,conf/popl/EbadiSS15}, the
already mentioned relational program logic~\citep{BartheKOZ13,BartheO13} and
relational refinement type systems~\citep{BGGHRS15}, linear (dependent) type
systems~\citep{ReedPierce10,GHHNP13}, product programs~\citep{BGGHKS14}, and
methods based on computing bisimulations families for probabilistic
automata~\citep{Tschantz201161,xu:hal-00879140}. None of these techniques can
deal with the examples in this paper.

\section{Conclusion}
We show new methods for proving differential privacy with approximate couplings.
We take advantage of the full generality of approximate couplings, showing that
the composition principle for couplings generalizes the standard composition
principle for differential privacy. Our principles support concise and
compositional proofs that are arguably more elegant than existing pen-and-paper
proofs. Although our results are presented from the perspective of formal
verification, we believe that our contributions are also relevant to the
differential privacy communities.

In the future, we plan to use our methods also for the verification
adaptive data analysis algorithms used to prevent false discoveries,
such as the one proposed by~\citet{DBLP:conf/stoc/DworkFHPRR15}, and
for the formal verification of mechanism design~\citep{BartheGAHRS15}.
Beyond these examples, the pointwise characterization of equality can
be adapted to stochastic dominance, and provides a useful tool to
further investigate machine-checked verification of coupling
arguments.

It could also be interesting to use the pointwise characterization of
differential privacy to simplify existing formal proofs. For example,
\citet{BartheKOZ13} prove differential privacy of the vertex cover
algorithm~\citep{GLMRT10}. This algorithm does not use standard primitives;
instead, it samples from a custom distribution specific to the graph.  The
existing formal proof uses a custom rule for loops, reasoning by case
analysis on the output of the random samplings. Pointwise differential privacy
could handle this reasoning more elegantly.

\paragraph*{Acknowledgments}
We warmly thank Aaron Roth for challenging us with the problem of
verifying Sparse Vector. We also thank him and Jonathan Ullman for good
discussions about challenges and subtleties
of the proof of Sparse Vector. This work was partially supported by NSF grants
TWC-1513694, CNS-1065060 and CNS-1237235, by EPSRC grant EP/M022358/1 and by a
grant from the Simons Foundation ($\#360368$ to Justin Hsu).
\fi

\bibliographystyle{abbrvnat}
\bibliography{header,main}

\end{document}